\newtheorem{thm}{Theorem}[section]
\newtheorem{cor}[thm]{Corollary}
\newtheorem{lem}[thm]{Lemma}
\newtheorem{prop}[thm]{Proposition}
\newtheorem*{prob*}{Problem}
\newtheorem*{thm*}{Theorem}
\theoremstyle{definition}
\newtheorem{defn}[thm]{Definition}
\newtheorem*{defn*}{Definition}
\newtheorem{rem}[thm]{Remark}
\newtheorem*{rem*}{Remark}
\numberwithin{equation}{section}
\newcommand{\Ga}{\Gamma}
\newcommand{\ka}{\kappa}
\newcommand{\im}{\mathop{\mathrm{Im}}}
\newcommand{\ta}{\tilde{a}}
\newcommand{\tb}{\tilde{b}}
\newcommand{\tE}{\tilde{E}}
\renewcommand{\Im}{\operatorname{\rm Im}}       
\renewcommand{\Re}{\operatorname{\rm Re}}       
\def\le{\left(}
\def\ri{\right)}
\def\no{\nonumber}
\newcommand{\Eint}{\int_{E}}
\def\G{\Gamma}
\newcommand{\be}{\begin{equation}}
\newcommand{\ee}{\end{equation}}
\begin{document}
\title[Gegenbauer and other orthogonal polynomials on an ellipse]
 {\bf{Gegenbauer and other planar orthogonal poly-\\nomials on an ellipse in the complex plane}}

\author[G. Akemann]{Gernot Akemann}
\address{Faculty of Physics, 
Bielefeld University,  
P.O. Box 100131, 33501 Bielefeld, Germany
} \email{akemann@physik.uni-bielefeld.de}

\author[T. Nagao]{Taro Nagao}
\address{Graduate School of Mathematics, Nagoya University, Chikusa-ku, Nagoya 464-8602, Japan}\email{nagao@math.nagoya-u.ac.jp}

\author[I. Parra]{Iv\'an Parra}
\address{Faculty of Physics,  Bielefeld University,  
P.O. Box 100131, 33501 Bielefeld, Germany
} \email{iparra@physik.uni-bielefeld.de}

\author[G. Vernizzi]{Graziano Vernizzi}
\address{Siena College, 515 Loudon Road
Loudonville, NY 12211, USA}\email{gvernizzi@siena.edu}
\keywords{Planar orthogonal polynomials, ellipse, Bergman space, Selberg integral}

\commby{}
\begin{abstract}

We show that several families of classical orthogonal polynomials on the real line are also orthogonal 
on the interior of an ellipse in the complex plane, subject to a weighted planar Lebesgue  measure. 
In particular these include Gegenbauer polynomials 
$C_n^{(1+\alpha)}(z)$ for $\alpha>-1$ containing the Legendre polynomials $P_n(z)$, and the subset $P_n^{(\alpha+\frac12,\pm\frac12)}(z)$  of the Jacobi polynomials.
These polynomials  provide an orthonormal basis and the corresponding weighted Bergman space forms a complete metric space.  This leads to a certain family of Selberg integrals in the complex plane.
We recover the known orthogonality of Chebyshev polynomials of first up to fourth kind. The limit $\alpha\to\infty$  leads back to the known Hermite polynomials orthogonal in the entire complex plane. When the ellipse degenerates to a circle we obtain the weight function and monomials known from the determinantal point process of the ensemble of truncated unitary random matrices.

\end{abstract}

\maketitle
\tableofcontents
\section{Introduction}\label{SectionIntroduction}

Orthogonal polynomials in the complex plane play an important role for non-Hermitian random matrix theory. A prominent example is the elliptic Ginibre ensemble  with complex normal matrix elements, having different variances for their real and imaginary parts,  \cite{SCSS}. Its complex eigenvalues follow a determinantal point process, with its kernel constituted by the Hermite polynomials orthogonal in the complex plane \cite{FKS98}. Likewise, the chiral partner of this ensemble leads to a kernel of generalised Laguerre polynomials orthogonal in the complex plane \cite{Osborn,ABe}, cf. \cite{Karp}. 
The respective kernels allow for a complete characterisation of all complex eigenvalue correlation functions of these ensembles of random matrices. Moreover, in the limit of weak non-Hermiticity introduced in \cite{FKS},
these nontrivial polynomials allow to study an interpolation between the statistics of real eigenvalues of Hermitian random matrices on the one hand, e.g. of the Gaussian Unitary Ensemble characterised by Hermite polynomials on the real line,  and those of complex eigenvalues e.g. of the Ginibre ensemble, being characterised by monomial polynomials in the complex plane. We refer to \cite{APh} to a list of interpolating limiting kernels known to date.

In this paper we ask the question whether further classical orthogonal polynomials on the real line also form a set of orthogonal polynomials on a two dimensional domain in the complex plane. Orthogonal polynomials on the real line or subsets thereof, as well as those on one-dimensional curves on the complex plane - typically the unit circle - are a classical topic in Mathematics \cite{Szego}. Therefore, it is quite surprising that relatively few works have addressed this question. 
The orthogonality of Chebyshev polynomials of the second kind on the interior of an unweighted ellipse probably goes back to \cite{Henrici}.
The fact that Hermite polynomials are also orthogonal with respect to a Gaussian weight in the complex plane was first shown in 1990 \cite{EM}, cf. \cite{PdF} for an independent proof. Generalised Laguerre polynomials in the complex plane were found in the context of applications to quantum field theory in \cite{Osborn}, see \cite{ABe} for a concise orthogonality proof. The orthogonality of all Chebyshev polynomials of first to fourth kind on an ellipse can be found in \cite{Mason}.

While the Gram-Schmidt construction of orthogonal polynomials on any subset of the real line and in  the complex plane is completely analogous, given that all moments exist, cf. \cite{Walter}, the fact that the former always satisfy a three-step recursion relation is special. 
While Lempert \cite{Lempert} showed that we cannot expect any finite term recurrence for orthogonal polynomials in the complex plane in general, it was shown much more recently that the existence of a finite term recurrence relation on a unweighted bounded domain with sufficiently regular boundary implies that the domain is an ellipse and the recursion depth is three \cite{PS,KS}. 
This limits our search to elliptic domains as our polynomials originating from the real line do have a three step recurrence. We note, however, that the aforementioned results  \cite{PS,KS} only apply to unweighted domains. 
For the Chebyshev polynomials of first, third and fourth kind, the weight function on the ellipse is no longer flat \cite{Mason}.

In this work we obtain the following results. We show that the classical Gegenbauer or ultraspherical polynomials provide a family of planar orthogonal polynomials on the interior of a weighted ellipse. They generalise the monomials that appear in the determinantal point process on the unit disc, obtained from the ensemble of truncated unitary  random matrices \cite{ZS}. Furthermore, we find a subset 
of the Jacobi polynomials to be orthogonal on a weighted ellipse. These findings allow to recover the orthogonality of all four Chebyshev polynomials from \cite{Mason}. 
All these planar orthogonal polynomials  lead to examples for Selberg- (or Mehta-) type integrals in the complex plane containing a Vandermonde determinant modulus squared, when determining the normalisation of the corresponding determinantal point processes, see \cite{Peter-Selberg} for a review on Selberg integrals.

As an application we use the Gegenbauer polynomials to construct further (non-classical) planar orthogonal polynomials on a weighted ellipse, that do not satisfy a recursion relation of finite depth. In that sense the ellipse is not a special domain in the complex plane, once nontrivial weight functions are allowed. 
At present we do not know a random matrix model that leads to a determinantal point process with a kernel of Gegenbauer or a subset of Jacobi polynomials - apart from a trivial normal matrix representation. In this work we restrict ourselves to polynomials of finite degree. The asymptotic of the Bergman kernel in the limit of weak non-Hermiticity, both in 
 in the bulk and at the edge of the ellipse, will be presented elsewhere \cite{AKNP}.

The remainder of this article is organised as follows. 
To prepare the ground in Section \ref{Complete metric spaces} we show that the weighted Bergman space on the ellipse is a complete metric space. To that aim in  Section 
\ref{Gegenbauer polynomials} 
we prove the orthogonality of the  Gegenbauer polynomials 
$C_n^{(1+\alpha)}(z)$ of even degree, for $\alpha>-1$, 
with respect to the inner product on the weighted ellipse.
The case with an odd degree is very similar and presented in Appendix \ref{La-odd}. 
This immediately implies the orthogonality of Legendre polynomials $P_n(z)$ as well, and we recover the orthogonality of Chebyshev polynomials of the second kind $U_n(z)$.
In Appendix \ref{La-proof2} an alternative orthogonality proof for Gegenbauer polynomials independent of the degree is given, that in contrast relies on the known orthogonality of the Chebyshev polynomials of the second kind on the unweighted ellipse. The proof for the latter from \cite{Hormander} is collected in Appendix \ref{app:Uproof} for completeness. 
In Section \ref{sec:Jacobi} we prove that two families of particular Jacobi polynomials $P_n^{(\alpha+\frac12,\pm\frac12)}(z)$, for $\alpha>-1$,   are orthogonal  on weighted ellipses.
 The known orthogonality of the Chebyshev polynomials of third, fourth and first kind $V_n(z)$, $W_n(z)$ and $T_n(z)$ respectively follow as a consequence. 
In Section \ref{Bergman-recursion} we construct an explicit example for orthogonal polynomials on a weighted ellipse, that do not satisfy a recursion relation of finite depth. 
The construction is based on Gegenbauer polynomials and  the Heine-formula for planar  orthogonal polynomials. 
Here, we also present the Selberg integral based on the family of Gegenbauer polynomials $C_n^{(1+\alpha)}(z)$ as an example, that can be analytically continued in  $\alpha$.

\ \\
 \textbf{Acknowledgements.} 
We acknowledge support by the
German research council (DFG) through CRC 1283:
“Taming uncertainty and profiting from randomness and
low regularity in analysis, stochastics and their applications"
(GA), by the Japan Society for the Promotion of Science, KAKENHI 25400397 (TN), 
 and by the grant  DAAD-CONICYT/Becas Chile, 2016/91609937 (IP). 
 The hospitality of the MFO Oberwolfach during the workshop on “Free Probability" is also acknowledged (GA), in particular their excellent library service.

 \section{Weighted Bergman space on the interior of an ellipse}
 \label{Complete metric spaces}
 
To begin let us fix some notation for the quantities to be considered. For  $a>b>0$ the following provides an explicit  parametrisation of the interiour of an ellipse $E$:  
 \begin{align}
  \label{ellipsedef}
  & E\,=\, \{z \in \mathds{C}: h(z):=(\Re{z})^2/a^2+(\Im{z})^2/b^2<1 \} \ .
\end{align}
For  $ 0 < p < \infty$ and $-1 < \alpha < \infty$, we will denote by $A_{\alpha}^{p}:=A_{\alpha}^{p}(E)\subseteq L^{p}(E,\d{A}_{\alpha})$  the (weighted) Bergman space of the ellipse $E$, i.e. the subspace of analytic functions in $L^{p}(E,\d{A}_{\alpha})$ with finite $p$-norm. The area measure  
\begin{align}
  \label{ellipseweight}
  & \d{A}_{\alpha}(z)\,=\,(1+\alpha)(1-h(z))^{\alpha}\d{A}(z)
\end{align} 
is defined in terms of the normalised area measure on the ellipse $\d{A}(z)=\d x\d y/(\pi ab)$, with $z=x+iy$, 
together with $h(z)$ defined in the parametrisation of the ellipse \eqref{ellipsedef}. 
 It is not difficult to see that it is normalised $\forall\alpha>-1$:
 \begin{equation}
 \label{normAa}
 \int_{E}\d{A}_{\alpha}(z)\,=\,
\frac{1+\alpha}{\pi}\int_0^1\d rr\int_0^{2\pi}\d\theta\, (1-r^2)^{\alpha}
\,=\,
 1\ ,
 \end{equation}
after changing variables to
\begin{equation}
\label{cambio1}
 x=ar\cos(\theta)\ ,\ \ y=br\sin(\theta)\ ,
\end{equation}
with $r \in [0,1),\ \theta \in [0, 2\pi]$ and Jacobian $J(r,\theta)=abr$.
 For $1\leq p<\infty$ the associated $L^p$-norm is definded  by
\begin{align}
  \label{ine}
||f||_{p,\alpha}=\le\int_{E}|f(z)|^{p} \,\d A_{\alpha}{(z)}\ri^{1/p},
\end{align}
and for $0<p<1$ the corresponding metric is given by 
\be
d(f,g)\ =\ ||f-g||_{p,\alpha}^{p} \ .
\ee 

In this section we show that  the Bergman space $A_{\alpha}^{p}$ is a Banach space when $1\leq p <\infty$, and a complete metric space when $0<p<1$. The proof is quite standard and follows the lines of Corollary 1.12 and Proposition 1.13 in \cite{Conway}.
 
Let $t \in E$ and $0<\rho<\mbox{dist}(t,\partial E)=:d$ be arbitrary. We define the smaller ellipse
\begin{align}
  \label{ellipser}
  & E_\rho\,=\, \{z \in \mathds{C}: h_\rho(z):=(\Re{z})^2/(a-\rho/2)^2+(\Im{z})^2/(b-\rho/2)^2\leq1 \} \ ,
\end{align}
 and suppose that there is a point $z_0 \in B(t,\rho/2)\setminus E_\rho $:
\begin{align}
  \label{ine2}
  \{|z_0-w|\,: w \in \partial E\} \subseteq \{|z-w|\,: z \in B(t,\rho/2); w \in \partial E\}\ .
\end{align}
Taking the infimum on both sides of (\ref{ine2}), we obtain 
\begin{align}
  \label{inee}
\mbox{dist}(B(t,\rho/2),\partial E) \leq \mbox{dist}(z_0, \partial E)\ .
\end{align}
But (\ref{inee}) implies that $d-\rho/2\leq \rho/2$, therefore $B(t,\rho/2)\subseteq E_\rho$. In consequence we obtain
\begin{align}
  \label{sup}
\sup_{z\in B(t,\rho/2)} h(z) \leq \sup_{z\in E_{\rho}} h(z) \leq h(z_{*})=:c(\rho) \  ,\quad z_{*} \in \partial E_{\rho}\ .
\end{align}
It is easy to see that $0<c(\rho)<1$,  and it can be computed explicitly by introducing a  Lagrange multiplier, for example.

Thus, given $f \in A_{\alpha}^{p}$, $B(t,r)\subseteq E$ with positive minimum distance to the boundary $\partial E$, i.e. $0<\rho<\mbox{dist}(B(t,r),\partial E)$, we can find another positive constant $C>0$ such that
\begin{eqnarray}
|f(z)|^{p}&\leq& \frac{4}{\pi \rho^2} \int_{B(z,\rho/2)}|f(w)|^p \,\d A{(w)} 
\no\\
&\leq& C\int_{B(z,\rho/2)}|f(w)|^p \,\d A_{\alpha}{(w)}\no\\
&\leq& C\int_{E}|f(w)|^p \,\d A_{\alpha}{(w)}\no\\
&=& C\|f\|_{p,\alpha}^{p}\;\;\;\;\;\;\;\;\text{for}\; z \in B(t,r)\ .
 \label{prop}
\end{eqnarray} 
 In the first step we have used the subharmonicity of $|f|^p$. In the second step the upper bound is trivial for negative $-1<\alpha<0$, due to $0\leq h(z)$, whereas for positive $\alpha>0$ we have used the estimate from \eqref{sup}.
The  statement \eqref{prop} can be summarised in the following 
\begin{prop}\label{bound}
Let $ 0 < p < \infty$ and $-1 < \alpha < \infty$, and $K$ be a compact subset of $E$, with positive minimum distance to $\partial E$. Then, there is a positive constant C such that  
\begin{align}
  \label{sup}
 \sup_{K} |f(z)|^p \leq C\|f\|_{p,\alpha}^p\ , \no
\end{align}
for all $f \in A_{\alpha}^{p}$.
\end{prop}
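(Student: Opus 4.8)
The plan is to recognise that the pointwise estimate assembled in \eqref{prop} already contains the entire analytic content, so that only its promotion from a single ball to an arbitrary compact set remains. First I would fix $\delta:=\mathrm{dist}(K,\partial E)$, which is positive by hypothesis, and choose one radius $\rho$ with $0<\rho<\delta$ valid for the whole of $K$. For every $t\in K$ the ball $B(t,\rho/2)$ then sits inside $E$ with distance at least $\delta-\rho/2>0$ to $\partial E$, so the inclusion $B(t,\rho/2)\subseteq E_\rho$ and the uniform bound $\sup_{z\in B(t,\rho/2)}h(z)\le c(\rho)<1$ hold simultaneously for all $t\in K$, with a single constant $c(\rho)$ independent of $t$.

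With this uniform geometry in place, I would re-run the chain \eqref{prop} for an arbitrary $z\in K$. Since $f$ is analytic, $|f|^p$ is subharmonic for every $p>0$ (it is the exponential of the subharmonic function $p\log|f|$), so the sub-mean-value inequality over $B(z,\rho/2)$ gives the first line of \eqref{prop}. For the measure comparison I would distinguish the sign of $\alpha$: when $\alpha>0$ the bound $h(w)\le c(\rho)$ on $B(z,\rho/2)$ yields $(1-h(w))^\alpha\ge(1-c(\rho))^\alpha>0$ and hence $\d A\le C\,\d A_\alpha$ there, while for $-1<\alpha<0$ the bound $0\le h(w)<1$ gives $(1-h(w))^\alpha\ge1$ and the comparison is immediate. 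Enlarging the domain of integration to all of $E$ and recalling the definition \eqref{ine} of the norm then reproduces $|f(z)|^p\le C\|f\|_{p,\alpha}^p$ exactly as in \eqref{prop}.

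The decisive observation is that the constant $C$ produced this way depends only on $\rho$, on $c(\rho)$, and on $\alpha$, but neither on the base point $t$ nor on $z\in B(t,\rho/2)$. Because $\rho$ and $c(\rho)$ were chosen once and for all over $K$, I may simply take the supremum over $z\in K$ on the left-hand side to obtain $\sup_K|f(z)|^p\le C\|f\|_{p,\alpha}^p$, which is the claim; should one instead prefer to let $\rho$ vary with the point, a finite subcover of the compact set $K$ together with the maximum of the finitely many local constants achieves the same end.

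I do not anticipate a genuine obstacle, since the argument is essentially bookkeeping on top of \eqref{prop}. The only point requiring real care is the measure comparison when $\alpha>0$: there the lower bound $(1-c(\rho))^\alpha$ on the weight degenerates to zero as $\mathrm{dist}(K,\partial E)\to0$, so the constant $C$ necessarily blows up near the boundary. This is precisely why the hypothesis that $K$ have positive minimum distance to $\partial E$ is indispensable and cannot be weakened.
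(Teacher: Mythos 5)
Your proof is correct and follows essentially the same route as the paper's own derivation: the subharmonic sub-mean-value inequality for $|f|^p$, the uniform bound $\sup_{B(t,\rho/2)}h\le c(\rho)<1$ via $B(t,\rho/2)\subseteq E_\rho$, the sign split on $\alpha$ for comparing $\d A$ with $\d A_\alpha$, and enlargement of the integration domain to $E$ — precisely the chain \eqref{prop} that the paper then summarises as Proposition \ref{bound}. Your additions (the explicit justification that $|f|^p=e^{p\log|f|}$ is subharmonic, and the uniform choice of $\rho$ over $K$, or alternatively a finite subcover) only make explicit the compactness bookkeeping the paper leaves implicit.
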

One immediate consequence of this proposition  is that any Cauchy sequence $\{f_n\} \in A_\alpha^p$ is locally bounded, and so by Montel's Theorem it constitutes a normal family. Thus, some
subsequence converges locally uniformly in $E$, to a function in $A_\alpha^p$, and we have 
\begin{cor}\label{cor:closed}
For every $0 < p < \infty$,  $-1 < \alpha <\infty$,  the weighted Bergman space $A_{\alpha}^{p}$ is closed in $L^p(E, dA_{\alpha})$. 
\end{cor}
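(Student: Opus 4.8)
The plan is to show that $A_\alpha^p$ is closed in $L^p(E,\d{A}_\alpha)$ by proving it is sequentially closed, which suffices in a metric space. So I would take an arbitrary sequence $\{f_n\}\subseteq A_\alpha^p$ converging in the $L^p$-metric to some $f\in L^p(E,\d{A}_\alpha)$, and aim to produce an analytic representative in the same equivalence class, thereby placing the limit in $A_\alpha^p$. The key mechanism is to turn $L^p$-convergence into local uniform convergence on compacta, where analyticity is preserved.

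\textbf{Main steps.} First, since $\{f_n\}$ is convergent it is Cauchy in the $L^p$-metric, i.e. $\|f_n-f_m\|_{p,\alpha}\to 0$ (for $0<p<1$ this is $d(f_n,f_m)\to 0$). Second, I would apply Proposition \ref{bound}: for any compact $K\subseteq E$ with positive distance to $\partial E$ there is a constant $C$ with
\begin{align}
\sup_{K}|f_n(z)-f_m(z)|^p\ \leq\ C\,\|f_n-f_m\|_{p,\alpha}^p \no
\end{align}
because $f_n-f_m\in A_\alpha^p$. This shows $\{f_n\}$ is uniformly Cauchy on every such compact set, hence converges locally uniformly on $E$ to some analytic limit $g$ (by the Weierstrass theorem, a locally uniform limit of analytic functions is analytic). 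Third, I would identify $g$ with $f$: local uniform convergence on compacta implies $f_n\to g$ in $L^p_{\mathrm{loc}}$, while $f_n\to f$ in $L^p(E,\d{A}_\alpha)$; extracting a subsequence that converges $\d{A}_\alpha$-a.e.\ to $f$ and comparing with the pointwise limit $g$ forces $f=g$ almost everywhere. Finally, since $g$ is analytic and equals $f$ a.e.\ with $f\in L^p(E,\d{A}_\alpha)$, the analytic representative $g$ lies in $A_\alpha^p$, so the limit belongs to $A_\alpha^p$ and the space is closed.

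\textbf{Where the obstacle lies.} The genuinely delicate point is the identification $f=g$ and the handling of the weight. One must be careful that the constant $C$ in Proposition \ref{bound} depends on $K$ (equivalently on the distance $\rho$ to the boundary) and may blow up as $K$ exhausts $E$; this is why the argument only yields \emph{local} uniform convergence and not uniform convergence up to $\partial E$, and it is exactly why closedness—rather than some stronger boundary statement—is the right conclusion. Matching the two limits requires passing to an a.e.-convergent subsequence of an $L^p$-convergent sequence and invoking that two representatives of the same $L^p$ class agree a.e.; since the weight $(1-h(z))^\alpha$ is strictly positive on the open set $E$, a.e.\ with respect to $\d{A}_\alpha$ coincides with a.e.\ with respect to Lebesgue measure on compacta, so no measure-theoretic pathology arises. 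The rest is the standard Montel/normal-family packaging already flagged in the remark preceding the corollary.
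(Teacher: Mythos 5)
Your proposal is correct and follows essentially the same route as the paper: Proposition~\ref{bound} upgrades $L^p$-control to uniform control on compacta, the locally uniform limit $g$ is analytic, and an a.e.-convergent subsequence of the $L^p$-convergent sequence (the paper phrases this via convergence in measure plus Riesz' theorem) identifies $f=g$ a.e., hence $f\in A_\alpha^p$. The only difference is cosmetic: you apply Proposition~\ref{bound} to the differences $f_n-f_m$ to get a uniformly Cauchy sequence on compacta directly, whereas the paper invokes Montel's theorem on the locally bounded family --- your variant is, if anything, marginally tidier, since it yields locally uniform convergence of the whole sequence without extracting a subsequence at that step.
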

\begin{proof}
Let $\{f_n\}$ be a Cauchy sequence in $A_{\alpha}^{p}$ and $f \in L^p(E, dA_{\alpha})$, such that  it holds $\int |f_n-f|^p \d A_{\alpha}\rightarrow 0$ as $n \rightarrow \infty$.
By Montel's Theorem $\{f_n\}$ converges locally uniformly to a function $g$ that is analytic in $E$. Since
$\| f_n - f \|_{p,\alpha}^{p} \rightarrow 0$, this implies  that $f_n$ converges in measure to $f$. By  Riesz' Theorem there is a subsequence  $\{f_{n_{k}}\}$ such
that $f_{n_{k}}(z) \rightarrow f(z)$ a.e. Thus $f = g$ a.e.,  and so $f \in A_{\alpha}^{p}$.
\end{proof}
For $p\geq1$ it follows from Corollary \ref{cor:closed} that the Bergman space is a Banach space, and in particular for $p=2$ a Hilbert space.
In the next section we will consider the Bergman space for $p=2$, $A_{\alpha}^{2}$ as a Hilbert space, with the 
notion for the inner product defined as 
\begin{equation}
\label{innerproduct}
\langle f,g\rangle_\alpha := \Eint f(z)\overline{g(z)}\,\d A_\alpha{(z)} \ ,
\end{equation}
for two integrable functions $f,g\in A_\alpha^2$.

For the analyticity it is of course important that we consider the interior of the ellipse $E$ \eqref{ellipsedef}, being an open set. 
Because the boundary of the ellipse $\partial E$ is one-dimensional and of measure zero in the complex plane, all integrals over $\overline{E}=E\cup\partial E$ agree,
\begin{equation}
\label{intbarE}
\Eint f(z)\overline{g(z)}\,\d A_\alpha{(z)} = \int_{\overline{E}} f(z)\overline{g(z)}\,\d A_\alpha{(z)} \ ,
\end{equation}
for $-1<\alpha$.
We will come back to this point when relating to the ensemble of truncated unitary matrices in Remark \ref{rem:tUnitary}.


 \section{Orthogonality of Gegenbauer and Legendre polynomials}\label{Gegenbauer polynomials}

For any non-negative integer $n$ let us define the polynomials 
\begin{eqnarray} \label{bin}
p_{n}^{(\alpha)}(z)&:=& \frac{1}{\sqrt{h_n}} C_{n}^{(1+\alpha)}\le\frac{z}{c}\ri\ ,
\end{eqnarray}
where $C_{n}^{(1+\alpha)}(x)$ are the standard Gegenbauer polynomials on the real line having real coefficients, now taken with a complex argument. We recall that the ellipse $E$ in \eqref{ellipsedef} defining the inner product \eqref{innerproduct} is parametrised by the real numbers $a>b>0$.  The constant $c=\sqrt{a^2-b^2}>0$ is then the right focus of the 
ellipse $E$, and we define by
\begin{eqnarray} \label{binnorm}
h_{n}&:=&h_{n}(a,b)= \frac{1+\alpha}{1+\alpha+n}C_n^{(1+\alpha)}\le\frac{a^2+b^2}{a^2-b^2}\ri >0 \ ,
\end{eqnarray} 
the norms of the Gegenbauer polynomials in the complex plane. Their positivity follows from \eqref{C2n}, and a short argument goes as follows.  Because the Gegenbauer polynomials \eqref{bin} have all their zeros in $(-1,1)$, the fact that the argument $(a^2+b^2)/(a^2-b^2)>1$ in \eqref{binnorm} is to the right of this interval, together with the positivity of the leading coefficient of the $C_{n}^{(1+\alpha)}(x)$ there \cite{NIST},  
leads to the positivity of $h_n$ for all integers $n\geq0$. 
We claim the following
\begin{thm}
\label{propONB}
The set of polynomials $\{p_n^{(\alpha)}\}_{n\in\mathbb{N}}$ defined  in \eqref{bin} forms a orthonormal basis for $A_{\alpha}^{2}$. 
\end{thm}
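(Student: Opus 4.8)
My plan is to establish two things: first, that the $p_n^{(\alpha)}$ are orthonormal in $A_\alpha^2$, and second, that they form a complete system, i.e. that their closed linear span is all of $A_\alpha^2$. The orthonormality is the computational heart of the matter and is presumably what the bulk of Section \ref{Gegenbauer polynomials} is devoted to; the completeness should then follow from soft functional-analytic arguments combined with the density of polynomials.

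For the orthonormality, I would compute the inner product $\langle p_m^{(\alpha)}, p_n^{(\alpha)}\rangle_\alpha$ directly from the definition \eqref{innerproduct}. The key idea is to pass to the elliptic coordinates \eqref{cambio1}, writing $z = ar\cos\theta + ibr\sin\theta$, so that $1-h(z) = 1-r^2$ and $\d A_\alpha(z) = (1+\alpha)(1-r^2)^\alpha\, r\,\d r\,\d\theta/\pi$. Since $C_n^{(1+\alpha)}(z/c)$ is a polynomial with real coefficients of degree $n$, the product $C_m^{(1+\alpha)}(z/c)\,\overline{C_n^{(1+\alpha)}(z/c)}$ expands into monomials $z^j\bar z^k$, and I would reduce everything to evaluating the planar moments $\int_E z^j\bar z^k\,\d A_\alpha(z)$. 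In elliptic coordinates these moments factor into an angular integral over $\theta$ — which enforces a selection rule and produces Fourier-type orthogonality — and a radial integral $\int_0^1 r^{\,\cdot}(1-r^2)^\alpha\,\d r$ that is an elementary Beta integral. The expected outcome is that the cross terms cancel and the diagonal terms assemble, via the explicit three-term or generating-function structure of the Gegenbauer polynomials, into the claimed norm $h_n$ from \eqref{binnorm}. I anticipate that the cleanest route is to use the generating function or the known value of $C_n^{(1+\alpha)}$ at the special point $(a^2+b^2)/(a^2-b^2)$ to recognise the resulting sum; matching the combinatorial sum over moments to the closed form in \eqref{binnorm} is the step I expect to be the main obstacle, since it requires identifying a nontrivial Gegenbauer summation identity rather than a routine integral.

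For completeness, I would argue that $\{p_n^{(\alpha)}\}$ spans a dense subspace of $A_\alpha^2$. Since each $C_n^{(1+\alpha)}(z/c)$ has exact degree $n$ with nonzero leading coefficient, the span of $\{p_0^{(\alpha)},\dots,p_N^{(\alpha)}\}$ equals the space of polynomials of degree at most $N$; hence the $p_n^{(\alpha)}$ form an orthonormal sequence whose span is exactly the polynomials. It therefore suffices to show the polynomials are dense in $A_\alpha^2$. I would obtain this from the completeness established in Section \ref{Complete metric spaces}: given $f\in A_\alpha^2$, analyticity lets me expand $f$ in a Taylor series converging locally uniformly on the open set $E$, and Proposition \ref{bound} controls the sup norm on compacts by the $A_\alpha^2$-norm, so that truncated Taylor polynomials converge to $f$ in $\|\cdot\|_{2,\alpha}$ on exhausting compacta; combined with a dominated-convergence estimate near $\partial E$ using the integrable weight $(1-h)^\alpha$, this yields polynomial density. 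Since $A_\alpha^2$ is a Hilbert space (Corollary \ref{cor:closed}) in which an orthonormal set with dense span is an orthonormal basis, the theorem follows.

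Two technical points I would handle carefully. First, in the density argument the interchange of limit and integral near the boundary must be justified for all $-1<\alpha<\infty$, including the singular range $-1<\alpha<0$ where $(1-h)^\alpha$ blows up at $\partial E$; the integrability shown in \eqref{normAa} together with local uniform convergence on compacta should suffice. Second, the positivity $h_n>0$ is already secured by the argument given after \eqref{binnorm}, so normalisation by $1/\sqrt{h_n}$ is legitimate and the computed diagonal inner products indeed equal $1$.
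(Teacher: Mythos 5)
Your orthonormality plan is, in outline, exactly the paper's proof of Lemma \ref{COP-lemma}: pass to the elliptic coordinates \eqref{cambio}, expand $z^j\bar z^k$ binomially so that the angular integral enforces a selection rule, evaluate the radial part as the Beta integral \eqref{radial}, and recognise the diagonal sum as $C_n^{(1+\alpha)}\bigl((a^2+b^2)/(a^2-b^2)\bigr)$ via the cosine-sum identity \eqref{CcosId}. However, what you flag as the ``main obstacle'' --- the cancellation of all off-diagonal contributions --- is precisely the substance of the paper's proof and is not resolved in your proposal. The paper handles it by writing the surviving coefficients $a_{k}$ as Beta-type integrals of terminating hypergeometric functions \eqref{ankint}, evaluating these with \cite[7.512.2]{Grad}, and extracting the vanishing for $l<n$ from Euler's reflection formula and the limit \eqref{sinepsilon} (with the odd case in Appendix \ref{La-odd}). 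So this half of your proposal is a correct roadmap of the same route, but the decisive computation is missing.

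The completeness half contains a genuine error. You assert that ``analyticity lets me expand $f$ in a Taylor series converging locally uniformly on the open set $E$,'' but Taylor series converge on discs, and the largest disc about the centre contained in $E$ has radius $b<a$. A function in $A_\alpha^2$ need not have a Maclaurin series converging anywhere near the points $\pm a$: for instance a bounded branch of $f(z)=\sqrt{z-\mathrm{i}b}$ (cut taken outside $\overline{E}$) lies in $A_\alpha^2$ for every $\alpha>-1$, yet its Maclaurin series has radius of convergence exactly $b$, so its partial sums diverge at real points $x\in(b,a)\subset E$; for $\alpha>0$ the simpler $f(z)=(z-\mathrm{i}b)^{-1}$ does the same. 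Hence truncated Taylor polynomials do not converge on exhausting compacta, and no dominated-convergence estimate near $\partial E$ can repair the step. The conclusion you want --- density of polynomials in $A_\alpha^2(E)$ --- is in fact true, but it needs a different mechanism, e.g.\ dilations $f_t(z)=f(tz)$ (using that $h(tz)=t^2h(z)$ and that $t\overline{E}\subset E$ by convexity) followed by Runge approximation of each $f_t$ on $\overline{E}$, together with a justification that $\|f_t-f\|_{2,\alpha}\to0$; none of this is supplied.

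The paper avoids polynomial density altogether: assuming $\langle f,p_n\rangle_\alpha=0$ for all $n$, it degenerates the ellipse $b\to0$ as in Remark \ref{reallim} to obtain
\begin{equation*}
\int_{-1}^{1}f(ax)\,C_n^{(1+\alpha)}(x)\,(1-x^2)^{\alpha+\frac12}\,\d{x}=0\quad\text{for all }n,
\end{equation*}
invokes the completeness of the Jacobi/Gegenbauer system on the real line \cite{Szego} to conclude $f=0$ on $(-a,a)$, and finishes by analytic continuation, $f\equiv0$ on $E$. Your Hilbert-space framing (orthonormal set with dense span in the Hilbert space of Corollary \ref{cor:closed} is a basis, and $\mathrm{span}\{p_0^{(\alpha)},\dots,p_N^{(\alpha)}\}$ equals the polynomials of degree at most $N$ by the positivity of $h_n$) is fine; it is the density step itself that fails as written.
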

In view of the previous subsection we need to prove the orthonormality and completeness of the basis. The former is shown in the following lemma, whereas the completeness is deferred to the very end of this section. 
\begin{lem}
\label{COP-lemma}
For the sequence of Gegenbauer polynomials $\{C_{n}^{(1+\alpha)}\}_{n\in\mathbb{N}}$, with $-1<\alpha$, on the domain \eqref{ellipsedef} with weight \eqref{ellipseweight}, the following orthogonality relation holds 
 \begin{eqnarray} \label{OP}
\Eint C_{m}^{(1+\alpha)}\left(\frac{z}{c}\right)C_{n}^{(1+\alpha)}\left(\frac{\overline{z}}{c}\right)\,\d A_\alpha{(z)}&=& \frac{1+\alpha}{1+\alpha+n}C_n^{(1+\alpha)}\le\frac{a^2+b^2}{a^2-b^2}\ri\delta_{nm}\ ,
\end{eqnarray}
where $a>b>0$ and $c=\sqrt{a^2-b^2}$.
\end{lem}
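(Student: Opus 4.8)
The plan is to flatten the weight by pushing the whole problem onto the unit disc. The real-linear bijection
\[
z \;=\; \tfrac{a+b}{2}\,\zeta + \tfrac{a-b}{2}\,\bar\zeta , \qquad \zeta\in\mathbb{D},
\]
sends $\mathbb D$ onto $E$; in real coordinates it reads $\Re z=a\,\Re\zeta$, $\Im z=b\,\Im\zeta$, so its Jacobian is $ab$ and the normalised area $\d A(z)=\d x\,\d y/(\pi ab)$ pulls back to normalised area $\d A(\zeta)$ on $\mathbb{D}$. Moreover $h(z)=|\zeta|^2$, whence $\d A_\alpha(z)$ becomes $(1+\alpha)(1-|\zeta|^2)^\alpha\,\d A(\zeta)$. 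Introducing $\mu>0$ by $\cosh\mu=a/c$, $\sinh\mu=b/c$ (so that $e^{\pm\mu}=(a\pm b)/c$ and $\cosh2\mu=(a^2+b^2)/(a^2-b^2)$), one gets the clean form $z/c=\tfrac12\left(e^{\mu}\zeta+e^{-\mu}\bar\zeta\right)$. Since the Gegenbauer polynomials have real coefficients, $C_n^{(1+\alpha)}(\bar z/c)=\overline{C_n^{(1+\alpha)}(z/c)}$, so the left-hand side of \eqref{OP} is exactly the Bergman inner product $\langle C_m^{(1+\alpha)}(\cdot/c),\,C_n^{(1+\alpha)}(\cdot/c)\rangle_\alpha$ of \eqref{innerproduct}.

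The decisive feature of the disc is that its monomials are orthogonal: writing $\zeta=re^{i\theta}$, the angular integral forces the two exponents to agree, and the radial integral is a Beta integral,
\[
\int_{\mathbb{D}}\zeta^{u}\bar\zeta^{v}\,(1+\alpha)(1-|\zeta|^2)^\alpha\,\d A
\;=\;\delta_{uv}\,\frac{u!\,\Gamma(\alpha+2)}{\Gamma(u+\alpha+2)}\;=:\;\delta_{uv}\,w_u .
\]
I would therefore expand $C_m^{(1+\alpha)}(z/c)$ in the basis $\zeta^a\bar\zeta^b$. Two reductions follow at once. First, $z\mapsto -z$ preserves $E$ and $\d A_\alpha$ while sending $C_n^{(1+\alpha)}(z/c)\mapsto(-1)^nC_n^{(1+\alpha)}(z/c)$, so \eqref{OP} vanishes whenever $m+n$ is odd and only $m\equiv n\pmod 2$ survives. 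Second, the disc orthogonality keeps only matching harmonics, collapsing the double sum onto $a-b=a'-b'$; every surviving term then carries a factor $e^{2(a-b)\mu}$, which is precisely the structure occurring in $C_n^{(1+\alpha)}(\cosh2\mu)=\sum_k \frac{(1+\alpha)_k(1+\alpha)_{n-k}}{k!\,(n-k)!}\cosh\!\big(2(n-2k)\mu\big)$ on the right of \eqref{OP}.

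An equivalent and more economical route is to prove directly that $\langle C_m^{(1+\alpha)}(\cdot/c),\,z^k\rangle_\alpha=0$ for $0\leq k<m$: since $\{z^k\}_{k<m}$ spans the polynomials of degree $<m$, and $\overline{C_n^{(1+\alpha)}(z/c)}$ is such a polynomial for $n<m$, this yields all off-diagonal vanishing in \eqref{OP}. Packaged through the generating function $(1-2t\,z/c+t^2)^{-(1+\alpha)}=\sum_m C_m^{(1+\alpha)}(z/c)\,t^m$, the claim becomes that $\Eint(1-2t\,z/c+t^2)^{-(1+\alpha)}\,\bar z^{k}\,\d A_\alpha$ is a \emph{polynomial in $t$ of degree $\leq k$}; the base case $k=0$ is the appealing identity $\Eint(1-2t\,z/c+t^2)^{-(1+\alpha)}\,\d A_\alpha=1$.

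The computation thus reduces, in either formulation, to a single finite sum in which the radial weights $w_u$ are contracted against the Gegenbauer coefficients, and I expect this summation to be the main obstacle. Concretely one must show that the coefficient of $e^{2(a-b)\mu}$ vanishes identically for $m\neq n$ and, for $m=n$, sums by a balanced (Chu--Vandermonde / Pfaff--Saalsch\"utz type) evaluation to $\frac{1+\alpha}{1+\alpha+n}$ times the coefficient of $\cosh\!\big(2(n-2k)\mu\big)$ in $C_n^{(1+\alpha)}(\cosh2\mu)$. The extra factor $\frac{1+\alpha}{1+\alpha+n}$ is manufactured entirely by the radial Beta integral and is absent from the naive boundary ($r=1$) calculation; it is the one genuinely delicate point. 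As an alternative to the explicit summation, one can run the argument as an integration by parts against $(1-h)^{\alpha+1}$, whose vanishing on $\partial E$ kills the boundary term, and close the resulting relation using the three-term recurrence for the $C_n^{(1+\alpha)}$. In practice I would first verify the diagonal norm $m=n$ and then attack the off-diagonal cancellation.
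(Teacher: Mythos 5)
Your setup is sound and, in substance, identical to the paper's own route: the pullback $z=\tfrac{a+b}{2}\zeta+\tfrac{a-b}{2}\bar\zeta$ to the unit disc is exactly the elliptic change of variables \eqref{cambio}--\eqref{change} in polar form (with $e^{\mu}=R$, so $\cosh 2\mu=(a^2+b^2)/(a^2-b^2)$ as in \eqref{Rab}); your reduction to $\langle C_m^{(1+\alpha)}(\cdot/c),z^k\rangle_\alpha=0$ for $0\le k<m$ is precisely \eqref{pd}; the parity reduction, the radial Beta weights $w_u$ (which reproduce \eqref{radial}), the angular matching of harmonics, and the plan to assemble the diagonal norm from $C_n^{(1+\alpha)}(\cosh 2\mu)=\sum_k\frac{(1+\alpha)_k(1+\alpha)_{n-k}}{k!\,(n-k)!}\cosh\bigl(2(n-2k)\mu\bigr)$ all coincide with the paper's steps, the last being its use of \eqref{CcosId}. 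Everything you actually compute checks out. (Minor point: your monomial exponents $\zeta^a\bar\zeta^b$ clash notationally with the semi-axes $a,b$.)

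However, there is a genuine gap, and you flag it yourself: the statement that ``the coefficient of $e^{2(a-b)\mu}$ vanishes identically for $m\neq n$'' and that the diagonal case ``sums by a balanced (Chu--Vandermonde / Pfaff--Saalsch\"utz type) evaluation'' \emph{is} the content of the lemma, and your proposal never carries it out. After the angular matching and the Beta integral one is left, for each surviving power of $R$, with a finite alternating sum over the internal index (the paper's coefficients $a_{k'}$ in \eqref{thirdsum}), and nothing in your write-up shows these vanish for $l<n$. The paper closes this step concretely: the inner sum is recognised as a terminating Gauss function $F(-(n-k),1+\alpha+n+k;1+2k;x)$ integrated against $x^{l+k}(1-x)^\alpha$ (see \eqref{ankint}), this integral is evaluated via \cite[7.512.2]{Grad} after an $\varepsilon$-regularisation justified by dominated convergence, and Euler's reflection formula produces the factor $\Gamma(n-l-\varepsilon)\sin(\pi\varepsilon)$ of \eqref{sinepsilon}, which tends to $0$ for $l<n$ and to $-\pi$ for $l=n$ --- giving the off-diagonal vanishing and the norm coefficients \eqref{ank} in a single stroke, including the factor $\frac{1+\alpha}{1+\alpha+n}$ you correctly attribute to the radial integral. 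Your instinct that a Saalsch\"utz-type identity underlies this is right in spirit, but until you exhibit that evaluation (or an equivalent), the proposal is a correct plan rather than a proof; likewise your two reformulations --- that $\Eint(1-2tz/c+t^2)^{-(1+\alpha)}\bar z^{k}\,\d A_\alpha$ is a polynomial of degree $\le k$ in $t$, and the integration-by-parts argument against $(1-h)^{\alpha+1}$ closed by the three-term recurrence --- are restatements whose base cases are exactly the unproven cancellation.
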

\begin{rem}
In this section we will present an elementary proof of the orthogonality relation \eqref{OP}. Due to the reflection symmetry of the weight function, domain and parity of the polynomials, the proof can be split into even and odd polynomials separately. Because these two cases are very similar we only present the one for the even polynomials in the main body of the paper here. For completeness we have put the proof for the odd polynomials into the Appendix \ref{La-odd}.  A second independent proof valid for the orthogonality of the even and odd polynomials alike is presented in Appendix \ref{La-proof2}. 
It starts by assuming the known orthogonality of Chebyshev polynomials  of the second kind $U_n$ on the unweighted ellipse, which can be found in  \cite{Henrici} and that is reproduced for completeness in Appendix \ref{app:Uproof}. While the orthogonality proof in Appendix \ref{La-proof2} is more elegant, there the determination of the norms $h_n$ is much more cumbersome and therefore will not be presented.

In addition to the proof presented below, the orthogonality of the Chebyshev polynomials of the first up to forth kind follows as a corollary, as demonstrated in Section \ref{sec:Jacobi}.  This establishes an independent proof of \cite{Mason}, where the orthogonality of all four kind was shown previously.
\end{rem}

\begin{proof} It is sufficient to show that for all $m\in\mathbb{N}$
\begin{equation} \label{pd}
\Eint C_{m}^{(1+\alpha)}\left(\frac{z}{c}\right)\left(\frac{\overline{z}}{c}\right)^{j}\,\d A_\alpha{(z)}= 0 \;,\;\;\text{for}\; j=0,1,...,m-1\ ,
\end{equation}
holds true.
Since both the weight function \eqref{ellipseweight} and domain \eqref{ellipsedef} are invariant under the reflexion $z\to-z$,  and the polynomials have parity, 
$C_{n}^{(1+\alpha)}(-z)=(-1)^nC_{n}^{(1+\alpha)}(z)$, without restriction we assume that either $m = 2n$ and $j = 2l$ are both even, 
or $m = 2n+1$ and $j = 2l+1$ are both odd, and $l<n$.
In the following we will only present the even-even case.
The odd-odd case follows from the same line of arguments and is collected in Appendix \ref{La-odd} for the reader's convenience.

We rewrite the integral \eqref{pd} with $z=x+iy$ in terms of elliptic coordinates. To that aim we change variables as follows:
\begin{equation}
\label{cambio}
 x=ar\cos(\theta)\ ,\ \ y=br\sin(\theta)\ ,\ \ \mbox{with}\ \  r \in [0,1),\; \theta \in [0, 2\pi]\ .
\end{equation}
The Jacobian for this transformation reads $J(r,\theta)=abr$, and we obtain for the complex arguments 
 \begin{eqnarray} \label{change}
\frac{z(r,\theta)}{c}&=& \frac{r}{2}( Re^{i\theta}+ R^{-1}e^{-i\theta})\ ,\ \;\text{with}\; R:=\frac{a+b}{c}=\sqrt{\frac{a+b}{a-b}}\ .
\end{eqnarray}
We also obtain $h(z)=r^2$ from \eqref{ellipsedef}. This leads to the following expression
\begin{eqnarray} 
&&\Eint C_{2n}^{(1+\alpha)}\left(\frac{z}{c}\right)\left(\frac{\overline{z}}{c}\right)^{2l}\,\d A_\alpha{(z)}=
\nonumber\\ 
&&=\frac{1+\alpha}{\pi}\int_0^1\d rr\int_0^{2\pi}\d\theta\, C_{2n}^{(1+\alpha)}\left(\frac{z(r,\theta)}{c}\right)\left(\frac{\overline{z(r,\theta)}}{c}\right)^{2l}\,(1-r^2)^{\alpha}.
\label{pd1}
\end{eqnarray}
The even Gegenbauer polynomials can be written in terms of Gau{\ss}' hypergeometric function in the following way, see e.g. \cite[8.932.2]{Grad}
\begin{eqnarray} 
&&C_{2n}^{(1+\alpha)}\left(\frac{z(r,\theta)}{c}\right)= 
\frac{(-1)^n\Gamma(n+1+\alpha)}{\Gamma(n+1)\Gamma(1+\alpha)}
F\left(-n, n+\alpha+1;\frac12; \frac{z(r,\theta)^2}{c^2}\right)
\no\\
&&=\frac{(-1)^n}{2\G(1+\alpha)n!}\sum_{p=0}^{n}\sum_{k=0}^{2p}(-1)^p{n \choose p}{2p \choose k}\frac{\G(1+\alpha +n+p)\G(p)}{\G(2p)}
r^{2p} R^{2(k-p)} e^{2i\theta(k-p)}
\no\\
&&=\frac{(-1)^n}{2\G(1+\alpha)n!}\sum_{p=0}^{n}\sum_{k=0}^{2p}(-1)^p{n \choose p}{2p \choose k}\frac{\G(1+\alpha +n+p)\G(p)}{\G(2p)}
r^{2p} R^{2(p-k)} e^{2i\theta(p-k)}.
\no\\
\label{eq:par}
\end{eqnarray}
Here, we introduced two representations  to be both used below, using the binomial theorem for \eqref{change} in two equivalent ways.
In order to prepare the integration in \eqref{pd1}, we spell out the complex conjugated variable to the power $2l$:
\begin{eqnarray} \label{bin1}
\left(\frac{\overline{z(r,\theta)}}{c}\right)^{2l}&=& 
\le \frac{r}{2}\ri^{2l}  \le Re^{-i\theta}+ R^{-1}e^{i\theta}\ri^{2l}
\no\\
&=& \le \frac{r}{2}\ri^{2l} 
\left[\sum_{k=1}^{l}{2l \choose k+l}R^{-2k}e^{2i\theta k}+{2l \choose l}+\sum_{k=1}^{l}{2l \choose k+l}R^{2k}e^{-2i\theta k}\right].
\end{eqnarray}
From the radial integral in \eqref{pd1} we obtain, including all prefactors
\begin{equation}
\label{radial}
\frac{1+\alpha}{\pi}\int_0^1\d r\,r^{2p+1} \frac{r^{2l}}{2^{2l}} (1-r^2)^{\alpha}=\frac{\Gamma(2+\alpha)\Gamma(1+p+l)}{2^{2l+1}\pi\Gamma(2+\alpha+p+l)}.
\end{equation}
For the remaining angular integration we thus have
\begin{eqnarray}
&&\Eint C_{2n}^{(1+\alpha)}\left(\frac{z}{c}\right)\left(\frac{\overline{z}}{c}\right)^{2l}\,\d A_\alpha{(z)}=
\nonumber\\ 
&&= \frac{(1+\alpha)(-1)^n}{2^{2l+1}\pi}\sum_{k'=1}^{l}\sum_{p=0}^{n}\sum_{k=0}^{2p}{2l \choose k'+l}\frac{(-1)^p\G(1+\alpha +n+p)\G(1+l+p)}{(n-p)!k!(2p-k)!\G(2+\alpha+l+p)}
\no\\
&&\qquad \qquad\qquad\qquad\times  
R^{2(p-k-k')} \int_{0}^{2\pi}\d{\theta}\,e^{2i\theta(p-k+k')}
\no\\
&&\quad +\frac{(1+\alpha)(-1)^n}{2^{2l+1}\pi}\sum_{p=0}^{n}\sum_{k=0}^{2p}{2l \choose l}
\frac{(-1)^p\G(1+\alpha +n+p)\G(1+l+p)}{(n-p)!k!(2p-k)!\G(2+\alpha+l+p)}\no\\
&&\qquad \qquad\qquad\qquad\times  
R^{2(p-k)} \int_{0}^{2\pi}\d{\theta}\,e^{2i\theta(p-k)}
\no\\
&&\quad+  \frac{(1+\alpha)(-1)^n}{2^{2l+1}\pi}\sum_{k'=1}^{l}\sum_{p=0}^{n}\sum_{k=0}^{2p}{2l \choose k'+l}\frac{(-1)^p\G(1+\alpha +n+p)\G(1+l+p)}{(n-p)!k!(2p-k)!\G(2+\alpha+l+p)}
\no\\
&&\qquad \qquad\qquad\qquad\times  R^{2(k-p+k')} \int_{0}^{2\pi}\d{\theta}\,e^{2i\theta(k-p-k')}.
\no\\
\label{thetaint}
\end{eqnarray}
In the first step we have already simplified the binomial factors and Gamma-functions from \eqref{eq:par}. 
Notice that in the first two terms, obtained from integrating over the first two contributions on the right-hand side of \eqref{bin1}, we have used the second identity in \eqref{eq:par}, whereas for the last sum from \eqref{bin1} we have used the first form of identity in \eqref{eq:par}.
We now evaluate each of the multiple sums in \eqref{thetaint} individually. In the last triple sum we have $k=p+k'$ due to the angular integration, and because of $k\leq 2p$ and thus $k'\leq p$ we obtain for it
\begin{eqnarray}
&&\frac{(1+\alpha)(-1)^n}{2^{2l}}\sum_{k'=1}^{l}{2l \choose k'+l}
\sum_{p=k'}^{n}\frac{(-1)^p\G(1+\alpha +n+p)\G(1+l+p)}{(n-p)!(k'+p)!(p-k')!\G(2+\alpha+l+p)}R^{4k'}
\no\\
&&:=\frac{(1+\alpha)(-1)^n}{2^{2l}}\sum_{k'=1}^{l}{2l \choose k'+l}a_{k'}R^{4k'}.
\label{thirdsum}
\end{eqnarray}
It is a polynomials in $R$ of degree $4l$. We have to show that all its coefficients $a_{k'}=a_{k'}(n,l)$ vanish for $l<n$. Before we do that let us compute the other sums in \eqref{thetaint}.
From the second term in \eqref{thetaint}, the double sum, we obtain from $p=k$
\begin{equation}
\label{secondsum}
\frac{(1+\alpha)(-1)^n}{2^{2l}}{2l \choose l}\sum_{p=0}^{n}\frac{(-1)^p\G(1+\alpha +n+p)\G(1+l+p)}{(n-p)!(p!)^2\G(2+\alpha+l+p)}
=\frac{(1+\alpha)(-1)^n}{2^{2l}}{2l \choose l}a_{0}\ ,
\end{equation}
which is $R$-independent. It the same as the contribution in \eqref{thirdsum} for $k'=0$. For the first triple sum in \eqref{thetaint} we have again $k=p+k'$ and and thus $k'\leq p$:
\begin{equation}
\label{firstsum}
\frac{(1+\alpha)(-1)^n}{2^{2l}}\sum_{k'=1}^{l}{2l \choose k'+l}
\sum_{p=k'}^{n}\frac{(-1)^p\G(1+\alpha +n+p)\G(1+l+p)}{(n-p)!(k'+p)!(p-k')!\G(2+\alpha+l+p)}R^{-4k'}.
\end{equation}
It agrees with \eqref{thirdsum} replacing $R\to R^{-1}$. So in summary if we can show that all coefficients $a_{k'}$ vanish for $k'=0,1,\ldots,l$ when $l<n$ we are done. 
This can be seen as follows. From the definition \eqref{thirdsum} we have, after a change of variables,
\begin{eqnarray}
a_{k}&=&
\frac{(-1)^k}{(n-k)!}\,\sum_{p=0}^{n-k}(-1)^p{n-k \choose p}\frac{\G(1+\alpha +n+k+p)\G(1+l+k+p)}{(2k+p)!\G(2+\alpha+l+k+p)}
\no\\
&=&\frac{(-1)^k}{(n-k)!\G(1+\alpha)}\,\int_{0}^{1}\,\d{x}\, x^{l+k}(1-x)^{\alpha}\sum_{p=0}^{n-k}(-1)^p{n-k \choose p}\frac{\G(1+\alpha +n+k+p)}{\G(1+2k+p)}x^p
\no\\
&=&\frac{(-1)^k\G(1+\alpha+n+k)}{(n-k)!\G(1+\alpha)(2k)!}\,\int_{0}^{1}\,\d{x}\, x^{l+k}(1-x)^{\alpha}F(-n+k, 1+\alpha+n+k; 1+2k; x).
\no\\
\label{ankint}
\end{eqnarray}
This reduces the problem to show that the integral containing the hypergeometric function vanishes, when $l <n$ and $ \alpha> -1$.
Let us introduce a regularising parameter $\varepsilon>0$. We then have 
$$\displaystyle \left| x^{l+k+\varepsilon}(1-x)^{\alpha}F(-n+k, 1+\alpha+n+k, 1+2k, x) \right| \leq C_{F}\;x^{l+k}(1-x)^{\alpha}\;,\;\;\; x \in [0,1]\ ,$$
for some constant $C_F$. Since $\displaystyle x^{l+k}(1-x)^{\alpha} \in L^{1}([0,1]) $, by Lebesgue's dominated convergence theorem, we have 
\begin{equation}\label{prueba4}
\begin{split}
&\int_{0}^{1}\,\d{x}\, x^{l+k}(1-x)^{\alpha}F(-(n-k), 1+\alpha+n+k; 1+2k; x)\\
&=\lim_{ \varepsilon \to 0}\int_{0}^{1}\,\d{x}\, x^{l+k+\varepsilon}(1-x)^{\alpha}F(-(n-k), 1+\alpha+n+k; 1+2k; x)\\
&=\lim_{\varepsilon \to 0} \frac{\G(1+2k)\G(1+l+k+\varepsilon) \G(1+\alpha+n-k)\G(n-l-\varepsilon)}{\G(1+k+n)\G(2+\alpha+n+l+\varepsilon)\G(-\varepsilon-(l-k))}\\
&=\lim_{\varepsilon \to 0} \frac{(-1)^{l-k-1}\G(1+2k)\G(1+l+k+\varepsilon) \G(1+\alpha+n-k)\G(l+1-k+\varepsilon)}{\pi\G(1+k+n)\G(2+\alpha+n+l+\varepsilon)}\\
&\qquad \qquad \times \G(n-l-\varepsilon)\sin(\pi\varepsilon)\ .
\end{split}
\end{equation}
In the second step we have used  the following integral, see \cite[7.512.2]{Grad}
\begin{eqnarray*}
\int_{0}^{1}\,t^{\rho-1}(1-t)^{\beta-\gamma-m}\, F(-m, \beta; \gamma; t)\, \d{t}=\frac{\G(\gamma)\,\G(\rho)\,\G(\beta-\gamma+1)\G(\gamma-\rho+m)}{\G(\gamma+m)\,\G(\beta-\gamma+\rho+1)\G(\gamma-\rho)}\no\\
\mbox{for}\quad
 m=0, 1, 2,...;  \Re  \rho > 0, \Re(\beta-\gamma)>m-1\ ,
\end{eqnarray*}
and in the next step Euler's reflection formula. Finally the limit
\begin{equation}
\label{sinepsilon}
\lim_{\varepsilon\to0} \Gamma(n-l-\varepsilon)\sin(\pi\varepsilon)=\left\{
\begin{array}{cc}
-\pi & l=n\\
0 & l<n\\
\end{array}
\right.
\end{equation} 
establishes the claimed orthogonality of \eqref{pd} for even indices.

In order to compute the squared norm on the right-hand side of \eqref{OP}, we first compute \eqref{thetaint} for $n=l$. For that purpose we summarise the results for the coefficients $a_k$ in \eqref{ankint} that follows from the result above:
\begin{equation}
a_k(n,n)=\frac{(-1)^n\Gamma(1+\alpha+n+k)\Gamma(1+\alpha+n-k)}{\Gamma(1+\alpha)\Gamma(2n+\alpha+2)}\ ,
\label{ank}
\end{equation}
to be inserted into \eqref{thirdsum}. We thus obtain for this, as well as for \eqref{firstsum} at $n=l$,
\begin{eqnarray}
&&\sum_{k'=1}^{n}{2n \choose k'+n}a_{k'}(n,n)R^{\pm4k'}\no\\
&=& \frac{(-1)^n(2n)!}{\Gamma(1+\alpha)\Gamma(2n+\alpha+2)}\sum_{k'=1}^{n}\frac{\Gamma(1+\alpha+n+k')\Gamma(1+\alpha+n-k')}{(n-k')!(n+k')!}R^{\pm4k'}
\no\\
&=& \frac{(-1)^n(2n)!}{\Gamma(1+\alpha)\Gamma(2n+\alpha+2)}\sum_{k=n+1}^{2n}\frac{\Gamma(1+\alpha+k)\Gamma(1+\alpha+2n-k)}{(2n-k)!k!}R^{\mp4(n-k)}
\no\\
&=& \frac{(-1)^n(2n)!}{\Gamma(1+\alpha)\Gamma(2n+\alpha+2)}\sum_{k=0}^{n-1}\frac{\Gamma(1+\alpha+2n-k)\Gamma(1+\alpha+k)}{k!(2n-k)!}R^{\pm4(n-k)}\ ,
\label{anksum}
\end{eqnarray}
after relabelling the sum twice. With this result it is easy to see that we can write the three contributions \eqref{thirdsum}, \eqref{secondsum} and \eqref{firstsum} at $n=l$ to \eqref{thetaint} as a single sum as
\begin{eqnarray}
&&\Eint C_{2n}^{(1+\alpha)}\left(\frac{z}{c}\right)\left(\frac{\overline{z}}{c}\right)^{2l}\,\d A_\alpha{(z)}=\no\\
&&=\delta_{n,l}\frac{(1+\alpha)(2n)!}{2^{2n}\Gamma(1+\alpha)\Gamma(2n+\alpha+2)}\sum_{k=0}^{2n}\frac{\Gamma(1+\alpha+k)\Gamma(1+\alpha+2n-k)}{\Gamma(2n-k+1)\Gamma(k+1)}R^{4(n-k)}.
\label{C2n}
\end{eqnarray}
 The remaining sum can be related to a single Gegenbauer polynomial as follows. Because this sum is invariant under $k\to2n-k$ we can write it as
 \begin{eqnarray}
&=&\frac12 \sum_{k=0}^{2n}\frac{\Gamma(1+\alpha+k)\Gamma(1+\alpha+2n-k)}{\Gamma(2n-k+1)\Gamma(k+1)}\left(R^{4(n-k)}+R^{-4(n-k)}\right)\no\\
 &=&\sum_{k=0}^{2n}\frac{\Gamma(1+\alpha+k)\Gamma(1+\alpha+2n-k)}{\Gamma(2n-k+1)\Gamma(k+1)}\cosh[(2n-2k)\ln(R^2)]\no\\
 &=&\Gamma(1+\alpha)^2 C_{2n}^{(1+\alpha)}\left(\frac{a^2+b^2}{a^2-b^2}\right)\ .
 \label{Cab}
 \end{eqnarray}
 In the last step we have used the (analytically continued) relation \cite[18.5.11]{NIST}
\begin{equation}
\label{CcosId}
C_{j}^{(1+\alpha)}(\cos\theta)=\sum_{l=0}^j\frac{(1+\alpha)_l(1+\alpha)_{j-l}}{l!(j-l)!}\cos((j-2l)\theta)\ ,
\end{equation} 
with $(a)_n=\Gamma(a+n)/\Gamma(n)$ being the Pochhammer symbol, together with
 \begin{equation}
 \label{Rab}
 \cosh[\ln(R^2)]=\frac12(R^2+R^{-2})=\frac{a^2+b^2}{a^2-b^2}\ ,
 \end{equation}
that follows from \eqref{change}. In order to obtain \eqref{OP} we still need to multiply \eqref{C2n} with the leading power of the Gegenbauer polynomial which is easy to obtain from the first line of \eqref{eq:par}, cf. \cite{NIST}
\begin{equation}
C_{2l}^{(1+\alpha)}(x)=\frac{\Gamma(2l+1+\alpha)2^{2l}}{\Gamma(1+\alpha)(2l)!} x^{2l} + O(x^{2l-2})\ .
\end{equation}
Because the lower powers give zero, combined with \eqref{Cab} we finally have 
\begin{eqnarray}
&&\ \Eint C_{2n}^{(1+\alpha)}\left(\frac{z}{c}\right)C_{2l}^{(1+\alpha)}\left(\frac{\overline{z}}{c}\right)\d A_\alpha{(z)}
=\delta_{2n,2l}
\frac{(1+\alpha)}{(2n+\alpha+1)}
 C_{2n}^{(1+\alpha)}\left(\frac{a^2+b^2}{a^2-b^2}\right),
\end{eqnarray}
which agrees with \eqref{OP} for even indices. The proof for the odd polynomials follows exactly in the same way, and for completeness we have collected the necessary steps in Appendix \ref{La-odd}.
 \end{proof}

\begin{rem}\label{remarkU}
 In the case  $\alpha=0$  we recover  the orthogonality relation for Chebyshev polynomials of the second kind, due to $U_n(x)=C_n^{(1)}(x)$, which goes back to  \cite{Henrici}.  
We will come back to this statement in Section \ref{sec:Jacobi}.

For  $\alpha= -1/2$ we obtain as a special case the orthogonality of the 
Legendre polynomials $P_n(x)=C_n^{(1/2)}(x)$:
\begin{cor}\label{cor:Legendre}
The Legendre polynomials $P_n$ are orthogonal with respect to the weight function $\d A_\alpha$ defined in \eqref{ellipseweight} at $\alpha=-1/2$: 
\begin{equation}
\label{Legendre}
\Eint P_{m}\left(\frac{z}{c}\right)P_{n}\left(\frac{\overline{z}}{c}\right)\,\d A_{-\frac12}{(z)}= \frac{1}{1+2n}P_n\le\frac{a^2+b^2}{a^2-b^2}\ri\delta_{n,m}\ .
\end{equation}
\end{cor}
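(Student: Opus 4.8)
The plan is to obtain Corollary \ref{cor:Legendre} as an immediate specialisation of Lemma \ref{COP-lemma}, with no new integral computation required. The starting point is the classical identity relating the Legendre and Gegenbauer families, namely $P_n(x) = C_n^{(1/2)}(x)$, which corresponds to the parameter choice $1+\alpha = 1/2$, i.e. $\alpha = -1/2$. Since $-1/2 > -1$, this value lies in the admissible range $-1<\alpha$ for which Lemma \ref{COP-lemma} was established, and for which the weight $\d A_{-1/2}$ in \eqref{ellipseweight} is well-defined and normalised via \eqref{normAa}. Thus the entire apparatus of the previous proof applies verbatim at $\alpha = -1/2$.

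The next step is to substitute $\alpha = -1/2$ directly into the orthogonality relation \eqref{OP}. On the left-hand side, each occurrence of $C_n^{(1+\alpha)}\le\cdot\ri$ becomes $C_n^{(1/2)}\le\cdot\ri = P_n\le\cdot\ri$, producing exactly the integrand appearing in \eqref{Legendre}. On the right-hand side, the normalisation constant $h_n$ from \eqref{binnorm} simplifies: the prefactor $\frac{1+\alpha}{1+\alpha+n}$ reduces to $\frac{1/2}{1/2+n} = \frac{1}{2n+1}$, while the Gegenbauer factor $C_n^{(1/2)}\le\frac{a^2+b^2}{a^2-b^2}\ri$ becomes $P_n\le\frac{a^2+b^2}{a^2-b^2}\ri$. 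Assembling these two substitutions reproduces \eqref{Legendre} precisely.

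I do not anticipate any genuine obstacle in this argument, since the corollary is a pure boundary-value reading-off of the main lemma rather than an independent statement. The only points meriting a word of care are the elementary algebraic reduction of the coefficient $\frac{1/2}{n+1/2}$ to $\frac{1}{2n+1}$, and the confirmation that $\alpha = -1/2$ needs no separate analytic treatment. The latter is immediate: the dominated-convergence estimate bounding $x^{l+k+\varepsilon}(1-x)^{\alpha}F(\cdot)$ and the vanishing limit \eqref{sinepsilon} are both valid throughout $\alpha>-1$, so the proof of Lemma \ref{COP-lemma} holds uniformly on this range and specialises without qualification to $\alpha=-1/2$. Hence the corollary follows at once.
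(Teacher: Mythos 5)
Your proposal is correct and coincides with the paper's own derivation: the paper obtains Corollary \ref{cor:Legendre} exactly as you do, by setting $\alpha=-1/2$ in Lemma \ref{COP-lemma} and using $P_n(x)=C_n^{(1/2)}(x)$, so that the prefactor $\frac{1+\alpha}{1+\alpha+n}$ reduces to $\frac{1}{1+2n}$. Your additional remark that the analytic steps of the lemma's proof hold uniformly for all $\alpha>-1$, so no separate treatment of $\alpha=-1/2$ is needed, is accurate and consistent with the paper.
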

We have not been able to find this result in the literature.

Furthermore, we can make contact with  Hermite polynomials\footnote{We denote by this the 
Hermite polynomials orthogonal with respect to $\exp[-x^2]$ on $\mathbb{R}$.}  $H_n(x)$ as polynomials in the full complex plane.  Setting $z\rightarrow z/\sqrt{1+\alpha}$ and taking $\alpha$ to infinity in \eqref{OP}, we have from \cite[18.7.24]{NIST}
\begin{equation}
\lim_{\alpha\to\infty}
(1+\alpha)^{-\frac{n}{2}}
C_n^{(1+\alpha)}\left(
(1+\alpha)^{-\frac{1}{2}}
x\right)=H_n(x)/n!\ ,
\end{equation}
 leading to 
\begin{eqnarray} \label{finalhermite}
\int_{\mathds{C}}\,H_m(z/c)\,H_n(\overline{z}/c)\,e^{-h(z)}\d^2{z}&=& \pi n!\ ab\le 2\frac{a^2+b^2}{a^2-b^2}\ri^n\delta_{n,m}\ ,
\end{eqnarray}
with $h(z)$ defined in \eqref{ellipsedef}.
This reproduces the known orthogonality relation for Hermite polynomials in the complex plane, obtained by van Eijndhoven and Meyers \cite[Eq.(0.5)]{EM} 
for  $a=\sqrt{\frac{1}{1-A}}$ and $b=\sqrt{\frac{A}{1-A}}$, with $0<A<1$, see also \cite{PdF}.
\end{rem}
 
 \begin{rem}
 \label{rem:tUnitary}
In the limit $c\to0$, when the ellipse $E$ becomes a disc, we obtain for integer values of $\alpha$ the weight function that results from the complex eigenvalues of the ensemble of truncated unitary random  matrices studied in \cite{ZS}, with monomials as orthogonal polynomials. This can be seen as follows:
 We have from eq. \eqref{eq:par} that the monic Gegenbauer polynomials occurring in \eqref{OP} read:
 \begin{equation}\label{eq:Cmonic}
\tilde{p}^{(\alpha)}_n(z):=\frac{n!c^n}{2^n (1+\alpha)_{n}}C_{n}^{(1+\alpha)}(z/c)\ .
 \end{equation}
Multiplying \eqref{OP} with the corresponding factors we can take the limit $b\to a$,  implying $c\to 0$ 
in  this orthogonality relation, to obtain
\begin{equation}
\label{truncatedU}
\int_{x^2+y^2<a^2} z^m\bar{z}^n(1+\alpha)\left(1-\frac{|z|^2}{a^2} \right)^\alpha\frac{\d^2z}{\pi a^2}\ =\ \frac{\Gamma(n+1)\Gamma(1+\alpha)(1+\alpha)}{\Gamma(1+\alpha+n)(1+\alpha+n)}a^{2n}\delta_{n,m}\ ,
\end{equation}
where $z=x+iy$. After rescaling $z\to az$, and dividing \eqref{truncatedU} by $(1+\alpha)$, we arrive at the weight function and monic polynomials for the complex eigenvalues in the ensemble of truncated unitary random matrices \cite{ZS} on the unit disc. It is defined starting from the circular unitary ensemble of Haar distributed unitary random matrices of size $N\times N$ and truncating these to the upper left block of size $M\times M$ with $N>M$, by removing $N-M$ rows and columns. The weight function reads $w(z)=(1-|z|^2)^{N-M-1}$, that is we have to identify  $\alpha=N-M-1\geq0$. In this case there is no singularity on the boundary of the circle, and we may extend our integration from inside  the disc to include the boundary, cf. \eqref{intbarE}.
In this ensemble this is important as for large $M$ and small truncation $N-M$ a substantial fraction of eigenvalues of the truncated unitary matrix may remain on the unit circle. We refer to \cite{ZS} for a further discussion of the limiting behaviour.

In analogy to the relation between the Ginibre ensemble and its elliptic version, our Gegenbauer polynomials can thus be viewed as the orthogonal polynomials of an elliptic version of the truncated unitary ensemble \cite{ZS}, with an appropriate random matrix realisation yet to be constructed.  
 \end{rem}

\begin{rem}\label{reallim} 
Finally, we can establish contact with the usual orthogonality relation for the Gegenbauer polynomials on the real interval $[-1,1]$.
The change of variables for the imaginary part $y =\frac{b}{a}\hat{y}$ maps the ellipse to a disc of radius $a$. 
Together with $\d A_\alpha(z)=(1+\alpha)(1-(x/a)^2-(y/b)^2)^\alpha\d x\d y/(ab\pi)$,  this allows us to take the limit $b \rightarrow 0$ on \eqref{OP}
\begin{eqnarray}
&&\lim_{b\to0}\Eint C_{m}^{(1+\alpha)}\left(\frac{z}{c}\right)C_{n}^{(1+\alpha)}\left(\frac{\overline{z}}{c}\right)\,\d A_\alpha{(z)}= \no\\
&&=\int_{-a}^aC_{m}^{(1+\alpha)}\left(\frac{x}{a}\right)C_{n}^{(1+\alpha)}\left(\frac{x}{a}\right)\left(1-\frac{x^2}{a^2}\right)^\alpha
\int_{-\sqrt{a^2-x^2}}^{\sqrt{a^2-x^2}}\left(1-\frac{\hat{y}^2}{a^2-x^2}\right)^\alpha
\frac{(1+\alpha)\d \hat{y}\d x  }{a^2\pi} \no\\
&&=\int_{-1}^1C_{m}^{(1+\alpha)}(x)C_{n}^{(1+\alpha)}(x)\left(1-x^2\right)^{\alpha+\frac12}F\left(\frac12,-\alpha;\frac32;1\right)
\frac{2(1+\alpha)}{\pi}\d x
\no\\
&&=\frac{1+\alpha}{1+\alpha+n}C_{n}^{(1+\alpha)}(1)\delta_{n,m}\ .\no\\
\label{Climbto0}
\end{eqnarray}
Identities for the Gegenbauer polynomial \cite[Table 18.6.1]{NIST} 
\begin{equation}
C_{n}^{(1+\alpha)}(1)=\frac{\Gamma(2+2\alpha+n)}{\Gamma(2+2\alpha)\Gamma(n+1)}\ ,
\end{equation}
and for Gau{\ss}' hypergeometric function \cite[9.122]{Grad} at unity  
\begin{equation}
F\left(\frac12,-\alpha;\frac32;1\right)=\frac{\sqrt{\pi}\Gamma(1+\alpha)}{2\Gamma(\alpha+3/2)}\ ,
\end{equation}
yield the standard orthogonality relation 
\begin{eqnarray} 
\;\;\;\int_{-1}^{1}\,C_{n}^{(\alpha+1)}(x)\, C_{m}^{(\alpha+1)}(x)(1-x^2)^{\alpha +\frac{1}{2}}\d{x}
&=& \frac{2^{1-2(1+\alpha)}\pi\G(2+2\alpha+n)}{(1+\alpha+n)\G^2(1+\alpha)n!}\delta_{n,m}\ .
\label{real}
\end{eqnarray}
\end{rem}

We can now finish the  proof of Theorem \ref{propONB} by showing the completeness of the system of orthogonal polynomials.
 \begin{proof}
 Let $f \in A_{\alpha}^2$ with $\langle f,p_n\rangle_{\alpha} =0$ for all $n=0,1,2,...$. Then 
 \be\label{completeness}
 0= \lim_{b \to 0}\langle f,p_n\rangle_{\alpha}=  \int_{-1}^{1}\d{x}\,f(ax)\,C_{n}^{(1+\alpha)}(x)(1-x^2)^{\alpha +\frac{1}{2}}\ .
 \ee 
 Hence $f(ax)=0$ for all $x \in (-1,1)$, see \cite{Szego} for the completeness of the Jacobi polynomials on the real line. Since $f$ is regular in $E$, it follows that $f\equiv0$, i.e. $\{p_n^\alpha\}$ defined above form an orthonormal basis for $A_{\alpha}^{2}$.
 \end{proof}


\section{Orthogonality of certain Jacobi and all Chebyshev polynomials}
\label{sec:Jacobi}

In this section we will first deduce Corollary \ref{CorrJabobi} from our Lemma \ref{COP-lemma}, by mapping the Gegenbauer polynomials to a certain sub-family of Jacobi polynomials $P_n^{(\alpha+\frac12,\pm\frac12)}$ orthogonal on an ellipse, with a different weight function. 
We will not use the standard, symmetric representation \cite[18.7.1]{NIST}
\begin{equation}
\label{CPlinear}
C_n^{(1+\alpha)}(z)=\frac{(2+2\alpha)_n}{(\alpha+\frac32)_n}P_n^{(\alpha+\frac12,\alpha+\frac12)}(z)\ ,
\end{equation}
which is linear, but rather a quadratic transformation that leads to a non trivial orthogonality relation, as described below. 
Second, we will use this corollary to show the orthogonality of Chebyshev polynomials of the first, second, third and fourth kind 
$T_n$, $U_n$, $V_n$ and $W_n$, respectively, 
that were derived in a different way in \cite{Mason}, see
Corollary \ref{CorrCheby} below.
We will come back to the polynomials $U_n$ of the second kind, where the orthogonality was already stated in Remark \ref{remarkU}, following from Lemma \ref{COP-lemma}.

Let us summarise our first statement as follows.
\begin{cor}\label{CorrJabobi}
Define the ellipse $E$ as before in \eqref{ellipsedef},
and the function
\begin{equation}
\label{jdef}
j(w)= \frac{a}{b^2}|c+w|-\frac{c}{b^2}\Re{(c+w)}\ .
\end{equation}
It satisfies $0<j(w)<1$ on $E$. Then, for $\alpha>-1$ the following two sub-families of Jacobi polynomials are orthogonal on $E$: First,
 \begin{eqnarray} \label{OP2n}
\int_{E} P_n^{(\alpha+\frac12,-\frac12)}(w/c)P_m^{(\alpha+\frac12,-\frac12)}(\overline{w}/c)\,\d B_\alpha^-{(w)}&=& \frac{((1/2)_n)^2}{((\alpha+1)_n)^2}
\frac{1+\alpha}{1+\alpha+2n}C_{2n}^{(1+\alpha)}\le\frac{a}{c}\ri\delta_{n,m},\no\\
\end{eqnarray}
with respect to the weight function
\begin{equation}
  \label{dB-}
 \d{B}_{\alpha}^-(w)\,:=\,\frac{(1+\alpha)}{2\pi b}\frac{(1-j(w))^{\alpha}}{|c+w|}
 \d^2 w\ ,
\end{equation}
and, second, 
\begin{eqnarray} \label{OP2n+1}
\int_{E} P_n^{(\alpha+\frac12,\frac12)}(w/c)P_m^{(\alpha+\frac12,\frac12)}(\overline{w}/c)\,\d B_\alpha^+{(w)}&=& 
\frac{2c(1/2)_{n+1}^2(1+\alpha)(2+\alpha)}{a(\alpha+1)_{n+1}^2(2+\alpha+2n)}C_{2n+1}^{(1+\alpha)}\le\frac{a}{c}\ri\delta_{n,m},\no\\
\end{eqnarray}
with weight 
\begin{equation}
\label{dB+}
 \d{B}_{\alpha}^+(w)\,:=\,\frac{(1+\alpha)(2+\alpha)}{2\pi ab}(1-j(w))^{\alpha}\d^2 w\ .
\end{equation}
The measures $\d{B}_\alpha^\pm(w)$ are chosen to be normalised, $\int_{E}\d{B}_\alpha^\pm(w)=1$. 
 \end{cor}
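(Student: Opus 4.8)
The plan is to deduce both identities from the Gegenbauer orthogonality of Lemma \ref{COP-lemma} by a \emph{quadratic} change of variables, rather than from the linear relation \eqref{CPlinear}. The two analytic ingredients are the classical quadratic transformations (cf. \cite{NIST})
\[
C_{2n}^{(1+\alpha)}(x)=\frac{(1+\alpha)_n}{(\tfrac12)_n}\,P_n^{(\alpha+\frac12,-\frac12)}(2x^2-1),\qquad
C_{2n+1}^{(1+\alpha)}(x)=\frac{(1+\alpha)_{n+1}}{(\tfrac12)_{n+1}}\,x\,P_n^{(\alpha+\frac12,\frac12)}(2x^2-1),
\]
which suggest introducing the holomorphic map $w=2z^2/c-c$, so that $w/c=2(z/c)^2-1$ is exactly the argument occurring on the right. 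Inverting these relations expresses the two Jacobi families in \eqref{OP2n} and \eqref{OP2n+1} in terms of $C_{2n}^{(1+\alpha)}(z/c)$ and $C_{2n+1}^{(1+\alpha)}(z/c)$, respectively.

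The key geometric observation is that $w=2z^2/c-c$ does \emph{not} preserve $E$: it sends, two-to-one, the confocal ellipse $E'$ with semi-axes $a'=\sqrt{c(a+c)/2}$ and $b'=\sqrt{c(a-c)/2}$ onto $E$. First I would record that $a'^2-b'^2=c^2$, i.e. $E'$ has the \emph{same} focal distance $c$ as $E$, and that $(a'^2+b'^2)/(a'^2-b'^2)=a/c$. Writing $c+w=2z^2/c$, so that $|c+w|=2|z|^2/c$ and $\Re(c+w)=2\Re(z^2)/c$, a short computation with the definition \eqref{jdef} gives the crucial identity $j(w)=h'(z)$, where $h'(z)=(\Re z)^2/a'^2+(\Im z)^2/b'^2$ is the level function of $E'$. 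In particular the preimage of $E$ under the map is exactly $E'$, and $0<j(w)<1$ on $E$ with $j\equiv1$ on $\partial E$; this last fact I would also check directly from the left focal-radius formula $|w+c|=a+(c/a)\Re w$, valid on $\partial E$, which makes $j\equiv1$ there manifest.

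With these identifications the proof of \eqref{OP2n} becomes bookkeeping. Changing variables $w\mapsto z$ in $\int_E(\cdots)\,\d{B}_{\alpha}^{-}$ produces the Jacobian $|w'(z)|^2=16|z|^2/c^2$ and a factor $\tfrac12$ from the two-to-one covering; using $|c+w|=2|z|^2/c$ and $j(w)=h'(z)$ one checks that $\d{B}_{\alpha}^{-}$ pulls back to $2\,\d{A}'_{\alpha}$, where $\d{A}'_{\alpha}$ is the weight \eqref{ellipseweight} for $E'$ (recall $a'b'=bc/2$), so that the $\tfrac12$ and the $2$ cancel. Replacing the Jacobi polynomials by Gegenbauer polynomials through the first transformation above and applying Lemma \ref{COP-lemma} \emph{on} $E'$ then yields $\delta_{nm}$ together with the norm $\tfrac{1+\alpha}{1+\alpha+2n}C_{2n}^{(1+\alpha)}(a/c)$; collecting the prefactors $\big((\tfrac12)_n/(1+\alpha)_n\big)^2$ reproduces \eqref{OP2n}. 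The case \eqref{OP2n+1} is identical except that the extra factor $z$ in the odd transformation contributes $c^2/|z|^2=2c/|c+w|$; this exactly removes the $1/|c+w|$ present in $\d{B}_{\alpha}^{-}$, which explains why $\d{B}_{\alpha}^{+}$ in \eqref{dB+} carries no such factor, and it shifts the relevant index to $2n+1$.

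The main obstacle is conceptual rather than computational: one must recognise that the natural quadratic map does not fix $E$ but maps the confocal ellipse $E'$ onto it, and then invoke the Gegenbauer lemma on $E'$ (with its own semi-axes but the same $c$). Once the identity $j(w)=h'(z)$ and the value $(a'^2+b'^2)/(a'^2-b'^2)=a/c$ are established, the only remaining care is the tracking of constants, notably the cancellation of the factor $\tfrac12$ from the double cover against the factor $2$ relating $\d{B}_{\alpha}^{\pm}$ to $\d{A}'_{\alpha}$. Finally, the normalisations $\int_E\d{B}_{\alpha}^{\pm}=1$ need not be verified separately: they are the special case $n=m=0$ of \eqref{OP2n} and \eqref{OP2n+1}, using $C_0^{(1+\alpha)}=1$ and $C_1^{(1+\alpha)}(a/c)=2(1+\alpha)a/c$.
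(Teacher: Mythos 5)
Your proposal is correct and takes essentially the same route as the paper: both rest on the quadratic transformations \cite[18.7.15--16]{NIST} and the change of variables $w/c=2(z/c)^2-1$ applied to Lemma \ref{COP-lemma}, with the two-to-one covering absorbed by the $z\to -z$ symmetry and the same Jacobian bookkeeping. The only difference is one of direction and is cosmetic: you fix the target ellipse $E$ and construct its confocal preimage $E'$ (your identities $a'^2+b'^2=ca$, $a'^2-b'^2=c^2$, $j(w)=h'(z)$), whereas the paper maps the original ellipse forward to a new ellipse $\tilde{E}$ with $\tilde a=(a^2+b^2)/c$, $\tilde b=2ab/c$ and then relabels.
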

\begin{proof}
We begin with the orthogonality relation \eqref{OP2n}.  Using the quadratic transformation \cite[18.7.15]{NIST}, we have for all even Gegenbauer polynomials 
\begin{equation}
\label{C2nP}
C_{2n}^{(\alpha+1)}(z/c)=\frac{(\alpha+1)_n}{\left(\frac12\right)_n} P_n^{(\alpha+\frac12,-\frac12)}\left(2\left( \frac{z}{c}\right)^2-1\right)\ ,
\end{equation}
which leads us to identify the map to a new coordinate $w$
\begin{equation}
\label{wzmap}
\frac{w}{c}=2\left( \frac{z}{c}\right)^2-1\ .
\end{equation}
In order to specify the orthogonality relation following from Lemma \ref{COP-lemma} for this family of Jacobi polynomials in $w/c$, we have to determine the domain and weight function resulting from the map \eqref{wzmap} of the ellipse \eqref{ellipsedef} and weight \eqref{ellipseweight}. In order to make the mapping \eqref{wzmap} to be inverted for $z$ unique,
\begin{equation}
\label{zwmap}
\frac{z(w)}{c}=\sqrt{\frac{w+c}{2c}}\ ,
\end{equation}
we subdivide the ellipse $E=E_+\cup E_-$, with 
\begin{equation}\label{Eplus}
E^{+(-)}\,=\, \{z=x+iy \in \mathds{C}: x^2/a^2+y^2/b^2<1, x>(<)0 \} \ .
\end{equation}
Because the weight and measure are invariant under the inversion $z\to-z$, that maps $E^-\to E^+$, we have for the even Gegenbauer polynomials before the map \eqref{zwmap}
\begin{eqnarray}
\label{Epmmap}\quad\quad
\int_E C_{2n}^{(\alpha+1)}(z/c)C_{2l}^{(\alpha+1)}(\bar{z}/c)\d A_\alpha{(z)}&=& 2\int_{E^+} C_{2n}^{(\alpha+1)}(z/c)C_{2l}^{(\alpha+1)}(\bar{z}/c)\d A_\alpha{(z)}
\\
&=&\int_{\tE} P_n^{(\alpha+\frac12,-\frac12)}(w/c)P_m^{(\alpha+\frac12,-\frac12)}(\overline{w}/c)\,\d \tilde{B}_\alpha^-{(w)},
\nonumber
\end{eqnarray}
which leads to the claimed orthogonality in the second line, as we will explain now.
The map \eqref{zwmap} has a square root cut for $\{\Re(w)+c<0\}$, and for the Jacobian we obtain from 
\begin{equation}
\frac{\d z(w)}{\d w}=\frac{\sqrt{c}}{2\sqrt{2}}\frac{1}{\sqrt{w+c}}\ \ \Rightarrow\ \ \d^2 z = \frac{c\,\d^2 w}{8|w+c|}\ .
\label{zwJac} 
\end{equation}
In order to determine the domain to be $\tE\subset \mathbb{C}\setminus\{\Re(w)+c<0\}$, resulting from \eqref{zwmap}, we introduce two auxiliary quantities $A>B>0$ in terms of the parameters $a>b>0$ of the original ellipse $E$ in \eqref{ellipsedef}:
\begin{equation}
\label{ABdef}
A=\frac{a^2+b^2}{2a^2b^2}\ ,\ \ B=\frac{a^2-b^2}{2a^2b^2}=\frac{c^2}{2a^2b^2}\ .
\end{equation} 
Here, we have recalled the definition of parameter $c$. These two quantities satisfy
\begin{equation}
\label{ABrel}
A^2-B^2=\frac{2B}{c^2}\ \ \mbox{and}\ \ \frac{Bc^2}{2}+1=\frac{(a^2+b^2)^2}{4a^2b^2}\ .
\end{equation}
Furthermore, we can write for $z=x+iy$
\begin{equation}
\label{oldell}
A|z|^2-B\Re(z^2)=\frac{(a^2+b^2)(x^2+y^2)}{2a^2b^2}-\frac{(a^2-b^2)(x^2-y^2)}{2a^2b^2}=\frac{x^2}{a^2}+\frac{y^2}{b^2}\ .
\end{equation}
Therefore, the domain $E$ expressed in terms of the new variable $w=u+iv$ reads
\begin{eqnarray}
\label{newell}
1> A|z|^2-B\Re(z^2)&=&\frac{cA}{2}|w+c|-\frac{cB}{2}\Re(w+c)\no\\
&=&\frac{cA}{2}\sqrt{(u+c)^2+v^2}-\frac{cB}{2}(u+c),
\end{eqnarray}
which is the defining equation for the new domain. The claim that it is again given by an ellipse, with new parameters $\ta$ and $\tb$ to be determined, can be seen as follows. From \eqref{newell} we have
\begin{eqnarray}
&&0<\frac{cA}{2}\sqrt{(u+c)^2+v^2}< 1+\frac{cB}{2}(u+c) \no\\
\Leftrightarrow&&\frac{c^2}{4}(A^2-B^2)u^2+uc\left( \frac{c^2(A^2-B^2)}{2}-B\right)+ \frac{c^2A^2}{4}v^2< 1-\frac{c^4(A^2-B^2)}{4}+Bc^2\no\\
\Leftrightarrow&& c^2u^2+\frac{c^2(a^2+b^2)^2}{4a^2b^2}v^2< (a^2+b^2)^2\ ,\no\\
\end{eqnarray}
which is obtained after squaring the inequality, using \eqref{ABrel} and multiplying with $4a^2b^2$. We are thus led to define the new domain as 
\begin{equation}
\label{new ellipse}
\tE:=\{w=u+iv\in\mathbb{C}: u^2/\ta^2+v^2/\tb^2<1\}\ ,\ \mbox{with}\ \ \ta=\frac{(a^2+b^2)}{c}\ ,\ \ \tb=\frac{2ab}{c} \ .
\end{equation}
We note that $c^2=\ta^2-\tb^2=a^2-b^2$ follows. It remains to show \eqref{jdef}, which follows from \eqref{oldell} and \eqref{newell} as 
\begin{equation}\label{hjeq}
1-h(z)=1-\frac{cA}{2}|w+c|+\frac{cB}{2}\Re(w+c)=1-\tilde{j}(w)\ ,
\end{equation}
together with $cA/2=\ta/\tb^2$ and $cB/2=c/\tb^2$. Inserting all these into the right hand side of \eqref{Epmmap}, we arrive at \eqref{OP2n} with weight \eqref{dB-}.
The fact that the weight  \eqref{dB-} is normalised to unity immediately follows from setting 
$n=0$ in \eqref{OP2n}, as $C_0^{(1+\alpha)}(x)=1$. Dropping the tilde on all quantities  we arrive at the statement in \eqref{OP2n}.

It is important to note here that once we consider the Jacobi polynomials \eqref{OP2n}, with weight \eqref{dB-} on the new ellipse \eqref{new ellipse}, there is no more square root cut inside, which would lead to a slit domain. To see this we multiply the equation defining $E$
\begin{eqnarray}\label{jeq}
&&\frac{u^2}{a^2}+\frac{v^2}{b^2}<1\no\\
\Leftrightarrow&& a^2\left((u+c)^2+v^2\right)<(b^2+c(u+c))^2\ ,
\end{eqnarray}
by $a^2b^2$, to arrive at the second line. While it is clear that the left hand side is always positive, we can take the square root here without crossing zero, due to the following fact. It holds that $b^2+c(u+c)=cu+a^2$ inside the square on the right hand side is always positive for $u\in(-a,+a)$. This ends the proof for the first set of polynomials.

For the orthogonality relation \eqref{OP2n+1} of the second set of polynomials only few modifications are needed. We start from  Lemma \ref{COP-lemma} for the odd Gegenbauer polynomials and use the relation \cite[18.7.16]{NIST}:
\begin{eqnarray}
\label{C2n+1P}
C_{2n+1}^{(\alpha+1)}(z/c)&=&\frac{(\alpha+1)_{n+1}}{\left(\frac12\right)_{n+1}} \frac{z}{c} P_n^{(\alpha+\frac12,\frac12)}(2(z/c)^2-1)\ .
\end{eqnarray}
The identification of variables \eqref{wzmap} is identical, and the map to $E^+$ works in the same way as in \eqref{Epmmap}, after cancelling the two minus signs obtained from the reflection of the two odd polynomials. 
Apart from the additional constant factors, we obtain from \eqref{C2n+1P} an additional factor
\begin{eqnarray}
\left| \frac{z}{c}\right|^2=\frac{|w+c|}{2c}
\end{eqnarray}
which cancels the pole from the Jacobian in \eqref{zwJac}. This leads to the orthogonality \eqref{OP2n+1} with weight \eqref{dB+}, after multiplying with an overall factor $(2+\alpha)$ for the correct normalisation of the area measure. This can be seen using that 
$C_1^{(1+\alpha)}(x)=2(1+\alpha)x$ for $n=0$ on the right hand side.
\end{proof}
\begin{rem}
In order to show like in Section \ref{Complete metric spaces} that the Bergman space with weights $\d B_\alpha^-$  \eqref{dB-} and $\d B_\alpha^+$ \eqref{dB+} is closed in $L^p$, all one needs to do is to find an estimate as in \eqref{sup}, such that we can apply Proposition \ref{bound} and Corollary \ref{cor:closed} to these. 

From \eqref{jeq} it follow that $j(z)=1$ if and only if $z \in \partial{E}$. It is easy to see that there are no local extrema  for $j(z)$ inside of ${E}_{\rho}$, therefore $0< \max_{z\in {E}_{\rho}} j(z)=j(z_{*})<1$ for some $z_{*} \in {E}_{\rho}$. This shows that the Bergman space  $A_{\alpha}^p(E,\d B_\alpha^+)$ is closed in $L^p(E,\d B_\alpha^+)$. As a consequence of H\"{o}lder's inequality we obtain the same result for the weight  $\d B_\alpha^-$ and $p\geq 1$.
\end{rem}

\begin{rem}
As it was done in Remark \ref{reallim} we can make contact to the usual orthogonality relations for Jacobi polynomials $P_{n}^{(\alpha+\frac{1}{2},\frac{1}{2})}$ on the real line, by rescaling the imaginary part of $z$,  $\im z\to \frac{b}{a}\im z$.
The same steps can be taken for $P_{n}^{(\alpha+\frac{1}{2},-\frac{1}{2})}$. Without giving any details, in the limit $b\to 0$ we obtain in analogy to \eqref{Climbto0}
\begin{eqnarray}
&&\lim_{b\to0}\Eint P_{m}^{(\frac{1}{2}+\alpha,\frac{1}{2})}\left(\frac{z}{c}\right)P_{n}^{(\frac{1}{2}+\alpha,\frac{1}{2})}\left(\frac{\overline{z}}{c}\right)\,\d B_\alpha^+{(z)}= \no\\
&&=F\left(\frac12,-\alpha;\frac32;1\right)
\frac{(1+\alpha)(2+\alpha)}{2^{\alpha}\pi}\int_{-1}^1P_{m}^{(\frac{1}{2}+\alpha,\frac{1}{2})}(x)P_{n}^{(\frac{1}{2}+\alpha,\frac{1}{2})}(x)\left(1-x\right)^{\alpha+\frac12}(1+x)^{\frac12}\d x
\no\\
&&=\frac{2(1/2)_{n+1}^2(1+\alpha)(2+\alpha)}{(\alpha+1)_{n+1}^2(2+\alpha+2n)}C_{2n+1}^{(1+\alpha)}\le1\ri\delta_{n,m}\ ,\no\\
\end{eqnarray}
which yields the correct normalisation on $[-1,1]$, see \cite[18.3.1]{NIST}.

Finally, like in \eqref{completeness}, it is easy to see that  Jacobi polynomials $P_n^{(\alpha+\frac12,\pm\frac12)}$ provide an orthonormal basis for the underlying  Hilbert space from Corollary \ref{CorrJabobi}.
\end{rem}

In the remaining part of this section we will prove the orthogonality of the Chebyshev polynomials of first to fourth kind as a direct consequence of Corollary 
\ref{CorrJabobi}. The following statement is due to \cite{Mason}, 
 where the notation for the polynomials of third and fourth kind is interchanged compared to ours, $V_n\leftrightarrow W_n$. We follow the notation of \cite{NIST}.

\begin{cor}\label{CorrCheby}
The Chebyshev polynomials satisfy the following orthogonality relations  on the ellipse defined in \eqref{ellipsedef}, with $r=a+b$ and $c^2=a^2-b^2$:
\begin{eqnarray}
\label{Ch1st}
\quad\quad\quad \Eint T_{n}(z/c)T_{m}(\overline{z}/c)\,\frac{\d^2z}{|z^2-c^2|}&=& 
\left\{
\begin{array}{ll}
\displaystyle{\frac{\pi}{4n}}
((r/c)^{2n}-(c/r)^{2n})\delta_{n,m}& \mbox{for}\;  n>0,m\geq0,\\
&\\
\displaystyle{2\pi \ln({r}/{c})}
& \mbox{for}\;  n=m=0\ ,\ \\
\end{array}
\right.\\
\label{Ch2nd}
\Eint U_n(z/c)U_m(\overline{z}/c)\,\d^2z
&=&\frac{\pi c^2}{4(1+n)}((r/c)^{2n+2}-(c/r)^{2n+2})\delta_{n,m}\ ,\\
\label{Ch3rd}
\Eint V_n(z/c)V_m(\overline{z}/c)\,\frac{\d^2 z}{|c+z|}&=& \frac{\pi c}{1+2n}
((r/c)^{2n+1}-(c/r)^{2n+1})\delta_{n,m}\ ,\\
\label{Ch4th}
\quad\Eint W_n(z/c)W_m(\overline{z}/c)\,\frac{\d^2 z}{|c-z|}&=& \frac{\pi c}{1+2n}((r/c)^{2n+1}-(c/r)^{2n+1})\delta_{n,m}\ .
\end{eqnarray}
\end{cor}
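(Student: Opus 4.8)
The plan is to route all four kinds through a single change of variables to elliptic (Joukowski) coordinates, and to recognise the second, third and fourth kinds as the $\alpha=0$ specialisation of Corollary \ref{CorrJabobi}. Write $z=c\cos(\eta-i\xi)$ with $\eta\in[0,2\pi)$ and $\xi\in[0,\xi_0]$, where $\cosh\xi_0=a/c$ and $\sinh\xi_0=b/c$, so that $\xi=\xi_0$ is the boundary $\partial E$ and $e^{\xi_0}=(a+b)/c=r/c$. Setting $\zeta:=\eta-i\xi$ one has $z/c=\cos\zeta$, $\overline z/c=\cos\bar\zeta$, the area element $\d^2 z=c^2|\sin\zeta|^2\,\d\xi\,\d\eta$, and the factorisations $z^2-c^2=-c^2\sin^2\zeta$, $c+z=2c\cos^2(\zeta/2)$, $c-z=2c\sin^2(\zeta/2)$. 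Since each Chebyshev polynomial has real coefficients, $P(\overline z/c)=\overline{P(z/c)}$.

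The key point is that in these coordinates every weight exactly cancels the Jacobian together with its denominator, leaving the flat measure $\d\xi\,\d\eta$ times a pure product of elementary functions of $\zeta$ and $\bar\zeta$. Using $T_n(\cos\zeta)=\cos(n\zeta)$, $U_n(\cos\zeta)=\sin((n+1)\zeta)/\sin\zeta$, $V_n(\cos\zeta)=\cos((n+\tfrac12)\zeta)/\cos(\zeta/2)$ and $W_n(\cos\zeta)=\sin((n+\tfrac12)\zeta)/\sin(\zeta/2)$, the four integrands of \eqref{Ch1st}--\eqref{Ch4th} reduce respectively to $\cos(n\zeta)\overline{\cos(m\zeta)}$, $c^2\sin((n+1)\zeta)\overline{\sin((m+1)\zeta)}$, and $2c$ times $\cos((n+\tfrac12)\zeta)\overline{\cos((m+\tfrac12)\zeta)}$ or $\sin((n+\tfrac12)\zeta)\overline{\sin((m+\tfrac12)\zeta)}$, all against $\d\xi\,\d\eta$. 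For each case I would expand into exponentials $e^{\pm i\nu\zeta}=e^{\pm i\nu\eta}e^{\pm\nu\xi}$; the $\eta$-integral over $[0,2\pi]$ annihilates every term whose two frequencies do not cancel, which forces $n=m$ and retains only the diagonal terms $e^{\pm2\nu\xi}$. The surviving $\xi$-integral $\int_0^{\xi_0}\cosh(2\nu\xi)\,\d\xi=\sinh(2\nu\xi_0)/(2\nu)$ then yields $\tfrac12\big((r/c)^{2\nu}-(c/r)^{2\nu}\big)$ through $e^{\xi_0}=r/c$, which reproduces the stated right-hand sides after the constant prefactors are restored (with $\nu=n,\,n+1,\,n+\tfrac12$ as appropriate). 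The exceptional value $2\pi\ln(r/c)$ at $n=m=0$ in \eqref{Ch1st} arises because for $T_0\equiv1$ the $\xi$-integrand degenerates to the constant $1$, giving $2\pi\xi_0=2\pi\ln(r/c)$.

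It remains to connect this with Corollary \ref{CorrJabobi}. For the second kind $U_n=C_n^{(1)}$, relation \eqref{Ch2nd} is exactly Lemma \ref{COP-lemma} at $\alpha=0$, where $\d A_0$ is flat; for the fourth kind, $W_n\propto P_n^{(1/2,-1/2)}$ is the $\alpha=0$ case of \eqref{OP2n}, whose weight $\d B_0^-\propto|c+w|^{-1}\d^2 w$ is precisely one of the single-focus weights above, and the third kind then follows from the reflection $z\mapsto-z$ (which preserves $E$) together with $V_n(x)=(-1)^nW_n(-x)$, which is why \eqref{Ch3rd} and \eqref{Ch4th} share the same right-hand side with the two foci interchanged. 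The genuine obstacle is the first kind: $T_n\propto P_n^{(-1/2,-1/2)}$ sits at the \emph{excluded} endpoint $\alpha=-1$ of Corollary \ref{CorrJabobi}, and there the measure $\d B_\alpha^-$ does not converge to the bulk weight $|z^2-c^2|^{-1}$ but concentrates, non-integrably, on $\partial E$ (since $(1-j(w))^\alpha\to(1-j)^{-1}$). Hence \eqref{Ch1st} cannot be recovered by a naive $\alpha\to-1$ limit and must be proved directly; the elliptic-coordinate computation above does precisely that, and as a bonus treats all four kinds uniformly. I expect the only delicate bookkeeping to be the $V\leftrightarrow W$/focus conventions and the isolated logarithm at $n=0$.
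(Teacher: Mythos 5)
Your elliptic-coordinate computation is sound in outline, and it is worth noting how it sits relative to the paper: for \eqref{Ch1st} it \emph{is} the paper's own proof in different clothing (the paper applies the Joukowsky map $z=\tfrac12(w+c^2/w)$, which in your notation is $w=c\,e^{\xi+i\eta}$, sends $E$ to the annulus $c<|w|<r$, and notes $\d^2z/|z^2-c^2|=\d^2w/|w|^2$ before integrating in polar coordinates, including the logarithm at $n=m=0$), while for \eqref{Ch2nd}--\eqref{Ch4th} the paper instead specialises Corollary \ref{CorrJabobi} to $\alpha=0$, simplifies the resulting Gegenbauer norms via \eqref{CcosId}, and gets the fourth kind by the reflection $z\to-z$. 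Your uniform direct computation is a legitimate and more elementary alternative, and your observation that $T_n$ sits at the excluded endpoint $\alpha=-1$ of Corollary \ref{CorrJabobi}, so that \eqref{Ch1st} cannot be obtained as a limit and needs a separate argument, is correct and matches the structure of the paper.

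There is, however, one genuine error, precisely at the spot you flagged as ``delicate bookkeeping'': your trigonometric identities for $V_n$ and $W_n$ are interchanged relative to the convention the statement uses. The paper fixes its convention through $V_n(x)=(1+2n)P_n^{(1/2,-1/2)}(x)/P_n^{(1/2,-1/2)}(1)$, so its $V_n$ is the \emph{sine}-type polynomial $V_n(\cos\zeta)=\sin\bigl(\bigl(n+\tfrac12\bigr)\zeta\bigr)/\sin(\zeta/2)$ with $V_n(1)=2n+1$ (the paper explicitly warns that its $V_n\leftrightarrow W_n$ is interchanged compared to Mason--Handscomb), whereas you assigned the cosine-type to $V_n$. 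With your assignment the central claim that ``every weight exactly cancels the Jacobian'' fails for \eqref{Ch3rd} and \eqref{Ch4th}: one has $\d^2z/|c+z|=2c\,|\sin(\zeta/2)|^2\,\d\xi\,\d\eta$, while the cosine-type pair contributes $|\cos(\zeta/2)|^{-2}$, leaving the non-factorising remnant $2c\,|\tan(\zeta/2)|^2=2c\,(\cosh\xi-\cos\eta)/(\cosh\xi+\cos\eta)$, whose full Fourier series in $\eta$ couples distinct frequencies. Worse, the statement you would then be proving is false: with the cosine-type $V_1(x)=2x-1$ one computes $\int_E(2z/c-1)\,\d^2z/|c+z|=-4\pi b\neq0$, while $\int_E(2z/c+1)\,\d^2z/|c+z|=0$, so the weight $|c+z|^{-1}$ belongs to the sine-type family, as \eqref{Ch3rd} asserts in the paper's convention. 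The same swap infects your remark that ``$W_n\propto P_n^{(1/2,-1/2)}$ is the $\alpha=0$ case of \eqref{OP2n}'': in the paper that role is played by $V_n$, and the reflection argument then yields $W_n$ with $|c-z|^{-1}$, not the other way around. Once the two identities are interchanged, everything you wrote goes through verbatim: $|c+z|^{-1}$ cancels exactly against the sine-type pair, $|c-z|^{-1}$ against the cosine-type pair, the $\eta$-integral forces $n=m$, and $\int_0^{\xi_0}\cosh(2\nu\xi)\,\d\xi=\sinh(2\nu\xi_0)/(2\nu)$ with $e^{\xi_0}=r/c$ reproduces all the stated constants. Two harmless points you should still make explicit: $(\xi,\eta)\in(0,\xi_0)\times[0,2\pi)$ parametrises $E$ only up to the focal segment $[-c,c]$, a null set, so the change of variables is valid; and the identities $T_n(\cos\zeta)=\cos(n\zeta)$ etc.\ extend from real $\theta$ to complex $\zeta$ by analyticity of both sides.
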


Note that for better comparison with \cite{Mason}\footnote{In contrast to the orthogonality of the Chebyshev polynomials on the contour given by the boundary of the ellipse $\partial E$ stated in \cite{Mason} too, the weight function we find here differs from the classical weight on the real line, continued to the ellipse.} our statements are with respect to the flat measure $\d^2z$, rather than the area measure 
$\d A(z)=\d^2z/(\pi ab)$.

\begin{proof}
We begin with the Chebyshev polynomials of the second kind $U_n$.
Because of the relation \cite[18.7.4]{NIST}
\begin{eqnarray}\label{UCP}
U_n(z)=C^{(1)}_n(z)= \frac{1+n}{P_n^{(1/2,1/2)}(1)} P_n^{(1/2,1/2)}(z)\ ,
\end{eqnarray}
we set $\alpha = 0$ in \eqref{OP2n+1}, to obtain
 \begin{eqnarray} \label{Un}
\Eint P_n^{(1/2,1/2)}(z/c)P_m^{(1/2,1/2)}(\overline{z}/c)\frac{\d^2z}{\pi ab}&=&
\frac{2c((1/2)_{n+1})^2}{a((1)_{n+1})^2(1+n)}C_{2n+1}^{(1)}\le\frac{a}{c}\ri\delta_{n,m},\no\\
\end{eqnarray}
After using 
\begin{equation}
\label{C1norm}
P_n^{(\frac12,\frac12)}(1)=\Gamma(n+3/2)/(\Gamma(3/2)\Ga(n+1))\ ,
\end{equation}
from \cite[Table18.6.1]{NIST}, we arrive at 
\begin{eqnarray} \label{OP2nparticular2}
\Eint U_n(z/c)U_m(\overline{z}/c)\,\d^2z&=& \frac{\pi cb}{2(1+n)}C_{2n+1}^{(1)}\le\frac{a}{c}\ri\delta_{n,m}\ .
\end{eqnarray}
Recalling $r=a+b$ and $c^2=a^2-b^2$, we have
\begin{equation}
\frac12\left( \frac{r}{c}+\frac{c}{r}\right)=\frac{a}{c}\ \ \mbox{and}\ \ \frac12\left( \frac{r}{c}-\frac{c}{r}\right)=\frac{b}{c}\ .
\end{equation}
With this the Gegenbauer polynomial on right hand side of \eqref{OP2nparticular2} can be simplified as follows. Applying \eqref{CcosId} we have 
\begin{eqnarray}
C_{2n+1}^{(1)}(\cos(i\ln(r/c)))
&=&\sum_{k=0}^{2n+1}\cos((2n+1-2k)i\ln(r/c))\no\\
&=& \sum_{k=0}^{2n+1}\frac12 \left( (c/r)^{2n+1-2k}+(r/c)^{2n+1-2k}\right) 
\no\\ 
&=&\frac{c}{2b} ((r/c)^{2n+2}-(c/r)^{2n+2})\ .
\label{C2n+1simple}
\end{eqnarray}
When replacing $C_{2n+1}^{(1)}({a}/{c})$ in \eqref{OP2nparticular2} we arrive at the statement \eqref{Ch2nd}.

Let us recall that the orthogonality of $U_n(z)$ \eqref{UCP} also follows by setting $\alpha=0$ in Lemma \ref{COP-lemma}, see Remark \ref{remarkU}. Comparing this statement 
\begin{equation}
\Eint U_n(z/c)U_m(\overline{z}/c)\,\d^2z= \frac{\pi ab}{(1+n)}C_{n}^{(1)}\le\frac{a^2+b^2}{a^2-b^2}\ri\delta_{n,m}\ ,
\end{equation}
with \eqref{OP2nparticular2}, we find that the following identity must hold:
\begin{equation}
\label{CC}
C_{2n+1}^{(1)}(x) = x C_n^{(1)}(2x^2-1)\ .
\end{equation}
Indeed this follows from the quadratic relation \eqref{C2n+1P} at $\alpha=0$, and \eqref{UCP}.
We emphasise, however, that beyond $\alpha=0$ apparently no such identity \eqref{CC} exists, that would allow to further simplify the right-hand side of Lemma \ref{COP-lemma}.

The Chebyshev Polynomials of the third kind $V_n$ are related to Jacobi polynomials following \cite[18.7.5]{NIST}:
 \begin{equation}
 V_n(z/c)= \frac{1+2n}{P_n^{(1/2,-1/2)}(1)} P_n^{(1/2,-1/2)}(z/c)\ .
 \end{equation}
Setting $\alpha=0$ in (\ref{OP2n}) we obtain
 \begin{equation} \label{OP2nparticular1}
\Eint V_n(w/c)V_m(\overline{w}/c)\,\frac{\d^2 w}{|c+w|}= \frac{2\pi b}{1+2n}C_{2n}^{(1)}\le\frac{a}{c}\ri\delta_{n,m}\ .
\end{equation}
Here, we have inserted \eqref{C1norm}.
Similar to \eqref{C2n+1simple} we can simplify the Gegenbauer polynomial on right hand side of \eqref{OP2nparticular1}, using  \eqref{CcosId} for an even index. We have  
\begin{eqnarray}
C_{2n}^{(1)}(\cos(i\ln(r/c)))
&=&\sum_{k=0}^{2n}\cos((2n-2k)i\ln(r/c))\no\\
&=& \sum_{k=0}^{2n}\le \frac{r}{c}\ri^{2n-k}\le \frac{c}{r}\ri^k\no\\ 
&=&\frac{c}{2b} ((r/c)^{2n+1}-(c/r)^{2n+1})\ ,
\label{C2nsimple}
\end{eqnarray}
which upon
replacing $C_{2n}^{(1)}({a}/{c})$ in \eqref{OP2nparticular1} leads to the statement \eqref{Ch3rd}.

The orthogonality relation for  Chebyshev polynomials of the fourth kind $W_n$ is simple, due to the relation $V_n(-x)=(-1)^nW_n(x)$ true $\forall n\in\mathbb{N}$. A reflection 
$z\to-z$ upon \eqref{OP2nparticular1} leads to 
\begin{equation}
\label{Ws}
\Eint W_n(w/c)W_m(\overline{w}/c)\,\frac{\d^2 w}{|c-w|}
= \frac{2\pi b}{1+2n}C_{2n}^{(1)}\le\frac{a}{c}\ri\delta_{n,m}\ .
\end{equation}
Here, the signs trivially cancel due to $\delta_{n,m}$, together with the simplification \eqref{C2nsimple} just described, leading to \eqref{Ch4th}.

We turn to the orthogonality for the Chebyshev polynomials of the first kind $T_n$. The relation \cite[18.7.18]{NIST}
\begin{equation}
T_{2n+1}(x)=xW_n(2x^2-1)
\end{equation}
allows us to find the corresponding weight function and orthogonality of the odd polynomials, starting from \eqref{Ws}:
\begin{eqnarray} \label{Cheb}
\int_{\tilde{E}}W_n(z'/c)W_m(\overline{z'}/c)\,\frac{\d^2z'}{|z'-c|}
&=&8c \int_{E^{+}}\frac{z}{c}W_n(2(z/c)^2-1)\frac{\overline{z}}{c}W_m(2(\overline{z}/c)^2-1)\frac{\d^2z}{|z^2-c^2|}\no\\
&=&4c \Eint T_{2n+1}(z/c)T_{2m+1}(\overline{z}/c)\,\frac{\d^2z}{|z^2-c^2|}\ .
\end{eqnarray}
Here, we use the inverse transformation of \eqref{wzmap} applied in the proof of Corollary \ref{CorrJabobi}, see $E^{+} $ and $ \tilde{E}$ are defined there. 
Thus
the  polynomials $\{T_n\}$ are orthogonal w.r.t. $\frac{1}{|z^2-c^2|}\d^2 z$. 
The following well known relation \cite{Mason} holds for the 
Joukowsky map $z/c=\frac{1}{2}(w/c+c/w)$
\begin{equation}
\label{TJ}
T_n(z/c)=\frac{1}{2}((w/c)^n+(c/w)^{n})\quad \text{for}\quad n\geq 0,
\end{equation}
which maps the ellipse $E$ to the annulus  $A:=\{w \in \mathds{C}: c<|w|<r\}$. 
We thus obtain for $n>0,m\geq0$
\begin{eqnarray} 
&&\Eint T_{n}(z/c)T_{m}(\overline{z}/c)\,\frac{\d^2z}{|z^2-c^2|}\no\\
&=& \int_{A}T_{n}(z(w)/c)T_{m}(\overline{z(w)}/c)\frac{\d^2w}{|w|^2} \no\\
&=& \frac14 \int_{c}^r \frac{\d s}{s} \int_0^{2\pi}\d \theta
\left( (s/c)^ne^{in\theta}-(c/s)^ne^{-in\theta}\right) 
\left( (s/c)^me^{-im\theta}-(c/s)^me^{im\theta}\right) \no\\
&=& \frac{\pi}{4n}((r/c)^{2n}-(c/r)^{2n})\delta_{n,m}\no\\
&=&\frac{\pi b}{2nc}C_{2n-1}^{(1)}\left(\frac{a}{c}\right)\delta_{n,m}\ ,
\label{Cheb1}
\end{eqnarray} 
by changing to polar coordinates $w=s\,e^{i\theta}$. Performing the elementary integrations we need to restrict us to $n>0,m\geq0$. The first part of \eqref{Ch1st} follows and  in the last step we have inserted \eqref{C2n+1simple}, in order to compare to the previous orthogonality relations. 
For $n=m=0$ with $T_0(x)=1$, following the same computation we have
\begin{eqnarray} \label{Cheb10}
\Eint \frac{1}{|z^2-c^2|}\d^2z&=& 2\pi \ln(r/c)\ ,
\end{eqnarray} 
which ends the proof of Corollary \ref{CorrCheby}.
\end{proof}


\section{Bergman polynomials, Selberg integrals and finite-term recurrence
}\label{Bergman-recursion}

All the orthogonal polynomials on an ellipse we encountered in the previous section satisfy a three-step recursion relation, as they result from classical polynomials on the real line. 
For polynomials on a planar, sufficiently regular domain with flat weight function this is a generic feature as summarised in the Theorem \ref{thm:KS} of Khavinson and Stylianopoulos  \cite{khavinson} below.
Using the  Gegenbauer polynomials from Lemma \ref{COP-lemma} that are orthogonal on a weighted ellipse, we will construct an example, that on a weighted domain this statement is no longer true, invalidating  the finite-term recursion relation. Because we will work with normalised expectation values to construct such an example, we will state in passing the normalising factor (partition function) for Gegenbauer polynomials, constituting a special case of a Selberg integral in the complex plane.

Consider a  bounded simply connected domain $D$ in the complex plane, let $\d{\mu}(z)=w(z)\d{A}(z)$  be a measure on $D$, where $\d A$ is the planar Lebesgue measure, and 
$w$  a non-negative weight function on $D$.
Given that all moments exist, $\int_D z^k \bar{z}^l w(z)\,\d{A}(z)<\infty$, 
 a unique sequence of polynomials $p_n(z)=\gamma_n z^n+... ; \gamma_n>0$ can be constructed using the Gram-Schmidt process,  that are orthonormal w.r.t $\d{\mu}$, see e.g. \cite{Walter}: 
\begin{equation}
\label{pOP}
\int_D p_n(z)\overline{p_m(z)} w(z)\,\d{A}(z)=\delta_{n,m}\ .
\end{equation} 
In the literature these polynomials are called {\it Bergman orthonormal polynomials}. For example  choosing $D=E$ as an ellipse, and $w(z)=(1+\alpha)(1-h(z))^\alpha$, these polynomials  $p_n$ are proportional to the Gegenbauer polynomials, see \eqref{bin}.
 
The multiplication operator acting on polynomials can always be represented by expanding $z\,p_n(z)$ as a series of the Bergman polynomials being a basis: 
\be
\label{mult}
z \,p_n(z)  \ = \ \sum_{l=0}^{n+1} c_{l,n} \,p_l(z)\;, n=0,1,2,\ldots
\ee
The Fourier coefficients $c_{l,n} $ are then given by 
\be\label{Fourrier}
c_{l,n}\ =\ 
\Eint z\,p_n(z) \overline{ p_l(z) } w(z)\,\d A{(z)}\ .
\ee
These coefficients $c_{l,n} $ constitute the entries of an infinite upper Hessenberg matrix 
\[
 M=\begin{pmatrix}
  c_{0,0} & c_{0,1} & c_{0,2} & c_{0,3}& \dots\\
  c_{1,0} & c_{1,1} & c_{1,2} & c_{1,3}& \dots \\
   0         & c_{2,1}  & c_{2,2} & c_{2,3} & \dots \\
  0          & 0           &  c_{3,2} & c_{3,3} & \dots\\
  \vdots &\vdots    &\vdots & \ddots   & \ddots      
 \end{pmatrix}\ .
\]
This matrix provides a representation of  the Bergman Shift operator, 
which is defined by $(T_z\,f)(z)=zf(z)$ with respect to the basis $\{p_n\}_{n\in\mathbb{N}}$.

\begin{defn}[see \cite{khavinson}] We say that the upper Hessenberg matrix is {\it  banded} or, equivalently, that the orthogonal polynomials $p_n$ satisfy a { \it finite $(d+1)$-term recurrence} if there exists a positive integer $d$ such that 
\be
c_{l,n} = 0,\; \text{for}\; 0\leq l< n+1-d\ .
\ee
\end{defn}

In \cite{khavinson} Khavinson and Stylianopoulos proved the following 
\begin{thm}\label{thm:KS}
If the Bergman polynomials orthogonal with respect to the flat measure, on a {\it bounded simply-connected} domain $D$  with {\it regular enough} boundary, satisfy a $(d+1)$-{\it term recurrence} relation with $2\leq d$, then $D$ is an ellipse and $d=2$.
\end{thm}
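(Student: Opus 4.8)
The plan is to translate the purely combinatorial banded structure of the Hessenberg matrix $M$ into the analytic geometry of $\partial D$, using the exterior conformal map and the Carleman--Suetin strong asymptotics of Bergman polynomials. Let $\Psi$ be the conformal map of the exterior of the closed unit disk onto the exterior of $\overline{D}$, normalised by $\Psi(\infty)=\infty$ and $\Psi'(\infty)=\rho>0$, with $\rho$ the logarithmic capacity of $\partial D$, and write its Laurent expansion
\[
\Psi(w)=\rho\,w+b_0+\frac{b_1}{w}+\frac{b_2}{w^2}+\cdots,\qquad |w|>1 .
\]
Set $\Phi=\Psi^{-1}$ for the reverse map. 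The aim is to prove that a finite $(d+1)$-term recurrence forces $\Psi$ down to $\rho w+b_0+b_1/w$ with $b_1\neq 0$, whose image of the unit circle is exactly an ellipse, and that then necessarily $d=2$.

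First I would invoke the strong asymptotics, valid precisely because the boundary is ``regular enough'' (say analytic): uniformly on $\partial D$ and in its exterior,
\[
p_n(z)=\sqrt{\tfrac{n+1}{\pi}}\,\Phi'(z)\,\Phi(z)^n\bigl(1+O(\eta^n)\bigr),\qquad \eta<1,
\]
together with the companion estimate $\gamma_n=\sqrt{(n+1)/\pi}\,\rho^{-(n+1)}(1+o(1))$ for the leading coefficients. Substituting $z=\Psi(t)$ with $t=\Phi(z)$ into $c_{l,n}=\langle z\,p_n,p_l\rangle$ and using that the $L^2$-mass of $p_n$ concentrates on $\partial D$, one matches the coefficient of $t^{\,n-k}$ in the identity $\Psi(t)\,t^n\sim\sum_l c_{l,n}(\gamma_n/\gamma_l)\,t^l$ to obtain
\[
\lim_{n\to\infty}c_{n-k,\,n}=b_k\ \ (k\ge 0),\qquad \lim_{n\to\infty}c_{n+1,n}=\rho .
\]
Thus the off-diagonal recurrence coefficients converge to the Laurent coefficients of the inverse map.

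Now the finite $(d+1)$-term recurrence states $c_{l,n}=0$ for $l<n+1-d$, i.e. $c_{n-k,n}=0$ for every $k\ge d$ and every $n$. Letting $n\to\infty$ gives $b_k=0$ for all $k\ge d$, so $\Psi$ is the Laurent polynomial $\rho w+b_0+\sum_{k=1}^{d-1}b_k w^{-k}$; hence $\partial D$ is the real-analytic image of the unit circle under a trigonometric polynomial and $D$ is a polynomial (algebraic) domain. When only $b_1$ survives, $\Psi(e^{i\theta})$ traces, after a rotation and translation, $b_0+(\rho+b_1)\cos\theta+i(\rho-b_1)\sin\theta$, an ellipse with semi-axes $\rho\pm b_1$ (degenerating to a disk iff $b_1=0$, the $d=1$ case excluded by hypothesis).

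The hard part, and the step I expect to be the main obstacle, is to upgrade ``$\Psi$ is a Laurent polynomial of degree $\le d-1$'' to ``$b_k=0$ for all $k\ge 2$,'' thereby forcing $d=2$ and a genuine ellipse. The leading asymptotics cannot deliver this, since for an analytic boundary the error is merely exponentially small and carries no further Laurent data. I would instead exploit that once $\Psi$ is a Laurent polynomial the area moments $\int_D z^i\bar z^j\,\d A$, and hence the entire Gram matrix and the Hessenberg matrix $M$, admit exact finite expressions obtained by pushing the area integral to the boundary parametrisation $z=\Psi(e^{i\theta})$ and evaluating residues; the plan is to show that this exact band structure is self-consistent only with the single reciprocal term $b_1/w$, a genuine term $b_m/w^m$ with $m\ge 2$ propagating through the Gram--Schmidt orthonormalisation to produce entries $c_{n-k,n}$ with $k$ growing in $n$, contradicting finiteness. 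An alternative route is through the subnormality of the shift $T_z$ on $A^2(D)$, where the positivity together with the finite-band constraint on the self-commutator $[T_z^{*},T_z]$ rules out all maps but the quadratic one. Either way, the conformal-map reduction is what makes the ellipse emerge, while the quantitative degree reduction is where the real technical work of \cite{khavinson} lies.
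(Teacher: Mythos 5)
First, a point of comparison: the paper does not prove this statement at all --- Theorem \ref{thm:KS} is quoted from Khavinson--Stylianopoulos \cite{khavinson} (cf.\ also \cite{PS,KS}), so your proposal must be measured against that source rather than against anything in the text. Your first half is in fact the correct opening move of the known argument: under sufficient boundary regularity, Carleman/Suetin strong asymptotics $p_n(z)\sim\sqrt{(n+1)/\pi}\,\Phi'(z)\Phi(z)^n$ yield the Hessenberg asymptotics $c_{n+1,n}\to\rho$ and $c_{n-k,n}\to b_k$, and bandedness then forces $b_k=0$ for $k\ge d$, so $\Psi$ is a Laurent polynomial of degree at most $d-1$ and $\partial D$ is an algebraic, in fact analytic, curve. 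Two caveats there: your matching display is mis-weighted (the ratio entering the limit is $\sqrt{(l+1)/(n+1)}$, not $\gamma_n/\gamma_l$, which does \emph{not} tend to $1$; harmlessly, for $l=n-k$ fixed the correct weight does tend to $1$), and you may not assume $\partial D$ analytic at the outset --- the hypothesis is only ``regular enough,'' so the asymptotics must be invoked in Suetin's $C^{k+\alpha}$ form, with analyticity of $\partial D$ emerging only \emph{after} $\Psi$ is shown to be a Laurent polynomial.

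The genuine gap is the one you yourself flag: nothing in the proposal excludes $\Psi(w)=\rho w+b_0+b_1/w+\cdots+b_{d-1}w^{-(d-1)}$ with some $b_m\neq0$, $m\ge2$, and this exclusion is the entire content of the theorem (the first half alone only proves ``$D$ is the conformal image of the disc under a Laurent polynomial of degree $\le d-1$''). Neither of your two candidate routes, as stated, produces a contradiction. The residue/moment route asserts that a term $b_m/w^m$, $m\ge2$, ``propagates through Gram--Schmidt'' to create entries $c_{n-k,n}$ with $k$ growing in $n$ --- but that propagation claim is precisely the statement to be proved, and no mechanism is given. The operator-theoretic route fares no better as stated: the self-commutator $[T_z^*,T_z]$ of a banded Hessenberg operator is automatically banded, so subnormality plus the band constraint yields no immediate positivity violation; one would need the much finer structure theory (finite-rank Friedrichs operator, which by Friedrichs' classical theorem has rank $\le1$ only for ellipses/discs) that \cite{PS} develops. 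In \cite{khavinson} this second half proceeds, as the title of that paper indicates, through the link between a finite-term recurrence and algebraicity of solutions of the Dirichlet problem --- equivalently, strong constraints on the Schwarz function of $\partial D$ --- together with Khavinson--Shapiro-type rigidity \cite{KS}, which kills every Laurent degree $m\ge2$ and forces $d=2$. A smaller loose end of the same kind: you also need $b_1\neq0$ (else $D$ is a disc, where the minimal recurrence has $d\le1$), which deserves an explicit word rather than the parenthetical dismissal you give it. As it stands, your proposal is a correct reconstruction of the first, softer half of the known proof, plus an honest but unexecuted sketch of the half where the theorem actually lives.
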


For all orthogonal polynomials  supported on the real line, in particular for $I=[-1,1]$, it is well-known that the associated  orthogonal polynomials  satisfy a {\it three-term recurrence} relation $(d=2)$, including a nontrivial weight on $I$. Because the Gegenbauer polynomials that we found to be orthogonal on the weighted ellipse also satisfy a three-step recurrence, it is a natural question if the above Theorem \ref{thm:KS} extends to the weighted case. 
Unfortunately the answer is no, and we will construct a counter example below. 
The fact that in general in the complex plane no three-step recurrence can be expected was illuminated by \cite{Lempert}.

We will use an alternative representation to Gram-Schmidt that allows to construct orthogonal polynomials, the Heine formula. For a given domain $D\subseteq \mathds{C}$ in the complex plane,  a non-negative weight function  $w(z)$, and normalised area measure $\d{A}$ on $D$ such that all moment exist, we define the following expectation value:
\be
\langle\, {\mathcal O}\, \rangle_{N,w} \ =\mathcal{Z_N}^{-1} \int_{D^N}
 {\mathcal O}\left|\Delta_N(z)\right|^2
\prod_{i=1}^Nw(z_i) \d{A}(z_i)  \ ,
\label{eq:Zev}
\ee
where $\mathcal{O}$ depends on $z_{i=1,\ldots,N}\in\mathbb{C}$. 
Here, $\Delta_N(z)=\prod_{i>j}^N(z_i-z_j)$ is the Vandermonde determinant, and $\mathcal{Z}_N$ is a normalisation constant that ensures $\langle\, 1\, \rangle_{N,w}=1$.
The expectation value can be throught of resulting from the joint density of complex eigenvalues of a complex non-Hermitian random matrix ensemble, such as the elliptic Ginibre ensemble. The Heine formula then states that the orthogonal polynomials of degree $N$ in monic normalisation, $\tilde{p}_N(z)=z^N+\ldots$, are given by
\be
\tilde{p}_N(z)=\left\langle\, \prod_{i=1}^N(z-z_i)\, \right\rangle_{N,w} \ .
\label{Heine}
\ee
That is they are given by the expectation value of a single characteristic polynomial. Denoting the squared norms of the monic polynomials by $\tilde{h}_N$, we have from \eqref{pOP}
\begin{equation}
\label{monicOP}
\int_D \tilde{p}_n(z)\overline{\tilde{p}_m(z)} w(z)\,\d{A}(z)=\delta_{n,m}\tilde{h}_n\ .
\end{equation} 
It is well known (see e.g. \cite{Mehta}) that the normalisation constant in \eqref{eq:Zev} can be expressed in terms of these norms as
\begin{equation}
\label{eq:ZN}
\mathcal{Z}_N=\int_{D^N} \left|\Delta_N(z)\right|^2 
\prod_{i=1}^N w(z_i)\d{A}(z_i) = N!\prod_{j=0}^{N-1}\tilde{h}_j\ .
\end{equation}

\begin{rem}\label{rem:sel}
{\bf Selberg integrals}.
For our Gegenbauer polynomials with weight function $w(z)\d{A}=(1+\alpha)(1-h(z))^\alpha\d{A}=\d{A}_\alpha$, we have for 
 the  monic polynomials \eqref{eq:Cmonic}
 \begin{equation}
\tilde{p}^{(\alpha)}_n(z)=\frac{n!c^n}{2^n (1+\alpha)_{n}}C_{n}^{(1+\alpha)}(z/c)\ ,
 \end{equation}
 with orthogonality relation 
 \begin{equation}
 \label{monicC}
\Eint \tilde{p}^{(\alpha)}_n(z)\overline{\tilde{p}^{(\alpha)}_m(z)}\,\d A_\alpha{(z)}=\delta_{n,m}\tilde{h}_n ^{(\alpha)} 
\ ,
 \end{equation}
 and squared norms
 \begin{align} 
\tilde{h}_n ^{(\alpha)}
&=\frac{n!^2c^{2n}}{2^{2n} (1+\alpha)_{n}^2} \frac{1+\alpha}{1+\alpha+m}C_n^{(1+\alpha)}\le\frac{a^2+b^2}{a^2-b^2}\ri\delta_{n,m}
\nonumber\\
&=
\left(\frac{c}{2}\right)^{2n}
\frac{\sqrt{\pi}\Gamma(2+\alpha)\Gamma(2+2\alpha+n)\Gamma(n+1)}{2^{2\alpha+1}\Gamma(\alpha+\frac32)\Gamma(1+\alpha+n)\Gamma(2+\alpha+n)}
 \ F\left(2+2\alpha,-n;\alpha+\frac32;\frac{-b^2}{c^2}\right) .
\nonumber\\
\label{tildeh}
\end{align}
Here, we have used the representation \cite[8.932.1]{Grad} of Gegenbauer polynomials in terms of Gau{\ss}' hypergeometric function,
\be\label{eq:C-Frel}
C_n^{(1+\alpha)}(t)=\frac{\Gamma(2+2\alpha+n)}{\Gamma(n+1)\Gamma(2+2\alpha)}
\ F\left(2+2\alpha,-n;\alpha+\frac32;\frac{1-t}{2}\right) .
\ee
Consequently we obtain the following Selberg integral in the complex plane
\begin{eqnarray}
&&\int_{E^N}
\left.\left|\Delta_N(z)\right|^\beta 
\prod_{i=1}^N \left(1-\frac{1}{a^2}\Re(z_i)^2-\frac{1}{b^2}\Im(z_i)^2\right)^\alpha\frac{\d^2z_i}{\pi ab}\ \right|_{\beta=2}\nonumber\\
  =&& N!\frac{\pi^{\frac{N}{2}}N!\Gamma(1+\alpha)^N}{2^{(2\alpha+1)N}\Gamma(\alpha+\frac32)}
 \left(\frac{c}{2}\right)^{N(N-1)} 
  \prod_{n=0}^{N-1} 
\frac{\Gamma(2+2\alpha+n)\Gamma(n+1)}{\Gamma(1+\alpha+n)\Gamma(2+\alpha+n)}
\nonumber\\
&&\quad\quad\quad\quad\quad\quad\quad\quad\quad\quad\quad\quad\quad\quad\quad\times\ F\left(2+2\alpha,-n;\alpha+\frac32;\frac{-b^2}{c^2}\right)  
\ ,
\label{Zev}
\end{eqnarray}
after using the doubling formula for the $\Gamma$-function.
This one-parameter family can be analytically continued in $\alpha$. Of course for general $\alpha\in\mathbb{C}$ it will no longer be positive and can no longer be interpreted as a normalisation constant. It is an open problem 
how this result could be  extended to arbitrary $\beta\in\mathbb{C}$. 
\end{rem}
Let us return to our example for a set of orthogonal polynomials on the weighted ellipse $E$, with a recursion relation deeper that three steps. Therefore, we will apply 
the following theorem proved in \cite{akever},
which generalises Christoffel's Theorem for polynomials on $\mathbb{R}$: 
\begin{thm}\label{Thm:AV}
Let $\{v_i;\, i=1,\ldots,K\}$ and $\{u_i;\, i=1,\ldots,L\}$ be two
sets of complex numbers which are pairwise distinct among each set.
Without loss of generality we assume $K\geq L\geq0$, where the empty set
permitted.  Then the following
statement holds\footnote{The following notation is understood:
$\Delta_0(x)=\Delta_1(x)=1$ and $\prod_{i=N}^{M\leq N-1}h_i=1$.}:
\be
\left\langle  
\prod_{k=1}^N\left[\prod_{i=1}^K(v_i-z_k) \prod_{j=1}^L(\bar{u}_j-\bar{z}_k)\right]
\right\rangle_{N,w}
\ =\  \frac{\prod_{i=N}^{N+K-1}\tilde{h}_i^{\frac12}\ 
\prod_{j=N}^{N+L-1} \tilde{h}_j^{\frac12}}{\Delta_K(v)\ \Delta_L(\bar{u})}
\det_{1\leq l,m\leq K}[\ {\mathcal B}(v_l,\bar{u}_m)\ ] \ ,
\label{Th}
\ee
with matrix 
\be
{\mathcal B}(v_l,\bar{u}_m) \ \equiv\ \left\{ 
\begin{array}{ccl}
\kappa_{N+L}(v_l,\overline{u_m}):=\sum_{i=0}^{N+L-1}p_i(v_l)\overline{p_i(u_m)} & \mbox{for} & m=1,\ldots,L \\
&&\\
p_{N+m-1}(v_l)                           & \mbox{for} & m=L+1,\ldots,K \\
\end{array}
\right. .
\label{Ddef}
\ee
The monic polynomials $\tilde{p}_n(z)$ are  orthogonal w.r.t $w(z)$, with squared norms $\tilde{h}_n$ and  ${p}_n(z)=\tilde{p}_n(z)/\sqrt{\tilde{h}_n}$.
 \end{thm}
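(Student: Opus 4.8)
The plan is to absorb the characteristic-polynomial insertions into enlarged Vandermonde determinants, expand these in the orthonormal polynomials $p_i=\tilde p_i/\sqrt{\tilde h_i}$, integrate out the $N$ variables $z_1,\dots,z_N$ by an Andr\'{e}ief-type identity, and finally recognise the surviving sum as the Cauchy--Binet expansion of the asserted $K\times K$ determinant. Concretely, using the factorisation $\Delta_{N+K}(z_1,\dots,z_N,v_1,\dots,v_K)=\Delta_N(z)\Delta_K(v)\prod_{k=1}^N\prod_{i=1}^K(v_i-z_k)$ together with its complex-conjugate analogue in the variables $\bar z_k,\bar u_j$, the insertion appearing in \eqref{Th}, weighted by $|\Delta_N(z)|^2$ from \eqref{eq:Zev}, becomes
\[
|\Delta_N(z)|^2\prod_{k=1}^N\Big[\prod_{i=1}^K(v_i-z_k)\prod_{j=1}^L(\bar u_j-\bar z_k)\Big]=\frac{\Delta_{N+K}(z,v)\,\overline{\Delta_{N+L}(z,u)}}{\Delta_K(v)\,\Delta_L(\bar u)}.
\]

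Next I would replace, row by row, the monomials inside each enlarged Vandermonde by the monic orthogonal polynomials $\tilde p_i$ (a triangular operation that leaves the determinant unchanged) and then pull out $\tilde p_i=\tilde h_i^{1/2}p_i$. This writes $\Delta_{N+K}(z,v)=\prod_{i=0}^{N+K-1}\tilde h_i^{1/2}\,D_K$ and $\overline{\Delta_{N+L}(z,u)}=\prod_{j=0}^{N+L-1}\tilde h_j^{1/2}\,\overline{D_L}$, where $D_K$ is the $(N+K)\times(N+K)$ determinant with rows $i=0,\dots,N+K-1$ and columns $p_i(z_1),\dots,p_i(z_N),p_i(v_1),\dots,p_i(v_K)$, and $D_L$ is its $(N+L)$-sized counterpart built from the $z_k$ and the $u_m$; since the norms $\tilde h_i>0$ are real, the conjugation of these scalar factors is harmless.

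The heart of the argument is the integration over $z_1,\dots,z_N$. Laplace-expanding $D_K$ along its $K$ columns at the $v_l$ and $\overline{D_L}$ along its $L$ columns at the $u_m$ turns each into a signed sum over the subset $S$ (resp.\ $T$) of rows attached to those fixed columns, the complementary $N$ rows $S^c,T^c$ pairing with the integration variables. Andr\'{e}ief's identity then reduces the $z$-integral of the two $N\times N$ eigenvalue minors to $N!\,\det[\delta_{ab}]_{a\in S^c,\,b\in T^c}$, which vanishes unless $S^c=T^c$ and equals $1$ otherwise. Since the indices $N+L,\dots,N+K-1$ lie beyond $\{0,\dots,N+L-1\}\supseteq T^c$, every surviving term has $S=T\sqcup\{N+L,\dots,N+K-1\}$ with $T\subseteq\{0,\dots,N+L-1\}$ and $|T|=L$. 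The remaining signed sum over $T$ of $\det[p_i(v_l)]_{i\in S}\,\overline{\det[p_i(u_m)]_{i\in T}}$ is exactly the Cauchy--Binet expansion of $\det[\mathcal B(v_l,\bar u_m)]$ from \eqref{Ddef}: indeed $\mathcal B$ factors as the $K\times(N+K)$ matrix $[p_i(v_l)]$ times the $(N+K)\times K$ matrix whose first $L$ columns are $[\overline{p_i(u_m)}]$ truncated to $i\le N+L-1$ and whose last $K-L$ columns are the unit vectors $e_{N+L},\dots,e_{N+K-1}$, the latter forcing precisely the index structure just found and reproducing the kernel entries $\kappa_{N+L}(v_l,\overline{u_m})$ in the first $L$ columns.

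It remains to collect constants. With $\mathcal Z_N=N!\prod_{j=0}^{N-1}\tilde h_j$ from \eqref{eq:ZN}, the factor $N!$ produced by Andr\'{e}ief cancels the factorial in $\mathcal Z_N^{-1}$, and
\[
\frac{\prod_{i=0}^{N+K-1}\tilde h_i^{1/2}\,\prod_{j=0}^{N+L-1}\tilde h_j^{1/2}}{\prod_{j=0}^{N-1}\tilde h_j}=\prod_{i=N}^{N+K-1}\tilde h_i^{1/2}\,\prod_{j=N}^{N+L-1}\tilde h_j^{1/2},
\]
which, together with the denominators $\Delta_K(v)\Delta_L(\bar u)$ inherited from the first display, yields \eqref{Th}. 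The one genuinely delicate point is the sign bookkeeping: the signs created by the two Laplace expansions and by the $\delta$-determinant must be shown to coincide with those of the Cauchy--Binet expansion. I expect this to be the main obstacle, and the cleanest remedy is to package the integration step as a single generalised Andr\'{e}ief/Gram lemma that collects all signs at once; the boundary cases $L=0$ (only the polynomial block of $\mathcal B$ survives) and $L=K$ (only the kernel block survives) should then be checked directly.
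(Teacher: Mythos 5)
Your proposal is correct, and there is in fact nothing to compare it against inside this paper: Theorem \ref{Thm:AV} is quoted from \cite{akever} without proof, and your route --- absorbing the insertions into enlarged Vandermondes, passing to the orthonormal $p_i$, Laplace expansion along the $v$- and $u$-columns, Andr\'eief integration, and resummation via Cauchy--Binet --- is essentially the standard argument behind results of this type, including the cited one. The single point you flag as the ``main obstacle'', the sign bookkeeping, in fact cancels identically, and it is worth recording why, since it closes the only gap in your sketch. Expanding the $(N+K)$-determinant along its last $K$ columns and the conjugated $(N+L)$-determinant along its last $L$ columns produces the total sign $(-1)^{\epsilon(S)+\epsilon'(T)}$ with $\epsilon(S)=\sum_{s\in S}(s+1)+\sum_{j=N+1}^{N+K}j$ and $\epsilon'(T)=\sum_{t\in T}(t+1)+\sum_{j=N+1}^{N+L}j$ (rows labelled $0,1,\dots$, all minors taken with indices in increasing order). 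Once the $\delta$-determinant forces $S=T\cup\{N+L,\dots,N+K-1\}$ with $T\subseteq\{0,\dots,N+L-1\}$, the row contributions satisfy
\begin{equation*}
\sum_{s\in S}(s+1)+\sum_{t\in T}(t+1)\;\equiv\;\sum_{j=N+L+1}^{N+K}j \pmod 2 ,
\end{equation*}
while the two constant column sums also combine to $\sum_{j=N+L+1}^{N+K}j$ modulo $2$, so $\epsilon(S)+\epsilon'(T)$ is even for every surviving $T$ --- the sign is $+1$ uniformly, not merely $T$-independent. On the Cauchy--Binet side no compensating sign arises either: with the rows of $Q_{S}$ listed in increasing order, the elements of $\{N+L,\dots,N+K-1\}$ are the largest in $S$, so the unit-vector columns occupy the bottom-right $(K-L)\times(K-L)$ block, which is the identity, and $\det Q_{S,:}=\det[\overline{p_t(u_m)}]_{t\in T,\,m\leq L}$ with no transposition sign. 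With that settled, your accounting of the norm factors against $\mathcal{Z}_N=N!\prod_{j=0}^{N-1}\tilde h_j$ is exact, and the boundary cases $L=0$ and $L=K$ need no separate verification since they are just the degenerate instances of the same block structure.
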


The multiplication operation on a sequence of polynomials  can be explicitly computed, using the above Theorem \ref{Thm:AV} for $K=2$ and $L=1$, as given in \cite{akever}. 
Following the Heine formula \eqref{Heine}, the polynomials $\{P^{(1)}_n\}_{n\in\mathbb{N}}$ orthogonal w.r.t. 
$|v-z|^2w(z)$ can be expressed in terms of the polynomials $p_n$ orthogonal with respect to $w(z)$. They are reading in monic normalisation 
\begin{eqnarray}
\tilde{P}^{(1)}_N(z) &=& 
\left\langle\, \prod_{i=1}^N(z-z_i)\, \right\rangle_{N,|v-\cdot|^2w} 
=\frac{\left\langle\prod_{i=1}^N(z-z_i)|v-z_i|^2\right\rangle_{N,w}}{ \left\langle\, \prod_{i=1}^N|v-z_i|^2\, \right\rangle_{N,w} }
\nonumber\\
&=&h_{N+1}^\frac12\frac{\ka_{N+1}(z,\bar{v})
p_{N+1}(v)-\ka_{N+1}(v,\bar{v})p_{N+1}(z)}{(v-z)
\ka_{N+1}(v,\bar{v})}  \ .
\label{OPL=1}
\end{eqnarray}
Their respective squared norms $\tilde{h}_N^{(1)}$ are not difficult to compute, using the orthonormality of the underlying polynomials $\tilde{p}_n$ \eqref{monicOP}:
\begin{eqnarray}
\tilde{h}_N^{(1)}&=& \int \tilde{P}^{(1)}_N(z) \overline{\tilde{P}^{(1)}_N(z)}\ |v-z|^2 w(z) \d{A}(z)\nonumber\\
&=& \frac{h_{N+1}}{\kappa_{N+1}(v,\bar{v})}\left( \kappa_{N+1}(v,\bar{v})|\tilde{P}^{(1)}_N(v)|^2+\kappa_{N+1}(v,\bar{v})^2\right) \nonumber\\
&=& \frac{h_{N+1}\kappa_{N+2}(v,\bar{v})}{\kappa_{N+1}(v,\bar{v})}\ .
\end{eqnarray}
This leads to the orthonormal polynomials
\begin{equation}\label{OP-norec}
{P}^{(1)}_N(z) = \frac{\ka_{N+1}(z,\bar{v})
p_{N+1}(v)-\ka_{N+1}(v,\bar{v})p_{N+1}(z)}{(v-z)
\sqrt{\ka_{N+1}(v,\bar{v})  \ka_{N+2}(v,\bar{v})  }} \ .
\end{equation}
The next step is to show that the Fourier coefficients of
\be
{z}P^{(1)}_N(z)  \ = \ \sum_{l=0}^{N+1} c_{l,N} \,P^{(1)}_l(z)
\ee
 are (in general) non-zero for $l\leq n-2$ for our example, when we choose $w(z)$ to be the Gegenbauer weight function, and thus the polynomials to be $\tilde{p}^{(\alpha)}_n$ from \eqref{monicC}, with squared norms \eqref{tildeh}.
Here, we may use that the orthonormalised Gegenbauer polynomials \eqref{bin} in the complex plane also satisfy a {\it three-term recurrence} relation \eqref{threeterm}, reading
\begin{equation}
\label{3t}
{z}\,p_n^{(\alpha)}(z)\ =\ a_{n+1}p^{(\alpha)}_{n+1}(z)+b_{n}p^{(\alpha)}_{n-1}(z)\ ,
\end{equation}
with 
\begin{equation}
\label{3tt}
a_{n+1} = \frac{c(n + 1)}{2 (n + \alpha + 1)}\sqrt{\frac{h_{n+1}}{h_{n}}}\ ,\quad 
b_n= \frac{c(n + 2\alpha+1)}{2 (n + \alpha + 1)}\sqrt{\frac{h_{n-1}}{h_{n}}}\ .
\end{equation}
Here, we use the definition from \eqref{binnorm} for the squared norms $h_n$ of the (un-normalised, non-monic) Gegenbauer polynomials. 
Notice that in contrast to the recursion for orthonormal Gegenbauer polynomials on the real line, the recurrence \eqref{3t} is not symmetric, $a_n\neq b_n$. This is due the difference in norm for $[-1,1]$ and $E$.
From now on we will use the following notation for $\ka_{i+1}(v,\bar{v}):=\ka_{i+1}$.
A simple  calculation implies that the coefficients 
\begin{equation}
c_{l,n}\ =\
\Eint z\,P_n^{(1)}(z) \overline{ P_l^{(1)}(z) }\ |v-z|^2\d A_\alpha(z)
\end{equation}
are given by
\begin{align}
c_{l,n}
=&\frac{1}{\sqrt{\ka_{n+1}\ka_{n+2}\ka_{l+1}\ka_{l+2}}}
\left[\left(\sum_{k=1}^la_kp_{k}^{(\alpha)}(v)p_{k-1}^{(\alpha)}(\bar{v})
-\sum_{k=0}^{min\{l,n-1\}}b_{k+1}p_{k}^{(\alpha)}(v)p_{k+1}^{(\alpha)}(\bar{v})
\right)
\right.\nonumber\\
&\times p_{l+1}^{(\alpha)}(\bar{v})p_{n+1}^{(\alpha)}({v})
-\Big(a_{l+1}p_{l}^{(\alpha)}(\bar{v})\Theta(n-l)-b_{l+2}p_{l+2}^{(\alpha)}(\bar{v})
\Theta(n-l-2)\Big)\ka_{l+1}p_{n+1}^{(\alpha)}({v})
\nonumber\\
&+\ka_{n+1}\ka_{n+2}a_{n+2}\delta_{n+1,l}-\ka_{n+1}b_{n+1}p_{n}^{(\alpha)}({v})
p_{l+1}^{(\alpha)}(\bar{v})\Theta(l-n)
+\ka_{n+1}\ka_{n}a_{n+1}\delta_{n-1,l}
\Bigg],\nonumber\\
\end{align}
where we have used the recursion \eqref{3t} and 
introduced the step function
\begin{equation}
\label{heavi}
\Theta(x):= 
\left\{
\begin{array}{ll}
1& \text{for}\;  x\geq0\ ,\\
&\\
0
& \mbox{for}\;  x<0\ .\\
\end{array}
\right.\\\ 
\end{equation}
If we only restrict ourselves to those indices $l\leq n-2$ which spoil the three-step recurrence, the remaining terms are simplified considerably and we obtain 
\begin{eqnarray}
c_{l\leq n-2,n}
&=&\frac{p_{n+1}^{(\alpha)}({v})}{\sqrt{\ka_{n+1}\ka_{n+2}\ka_{l+1}\ka_{l+2}}}
\left[\left(v\ka_{l}+b_lp_{l-1}^{(\alpha)}({v})p_{l}^{(\alpha)}(\bar{v})+ b_{l+1}p_{l}^{(\alpha)}({v})p_{l+1}^{(\alpha)}(\bar{v})\right)p_{l+1}^{(\alpha)}(\bar{v})
\right.
\nonumber\\
&&\quad\quad\quad\quad\quad\quad\quad\quad\quad\left.-\left(
a_{l+1}p_{l}^{(\alpha)}(\bar{v})+ b_{l+2}p_{l+2}^{(\alpha)}(\bar{v})\right) \ka_{l+1}
\right],
\label{clnrest}
\end{eqnarray}
which in general does not vanish for all $l$ down to zero.

Let us first check that we recover the three-term recurrence in the real limit $b\to0$, where we have to show  that indeed $c_{l\leq n-2,n}=0$ in this limit. 
When $b=0$ and the corresponding normalisation constants are understood as $h_n=h_n(a,0)$, the recursion \eqref{3tt} becomes symmetric,  $a_n=b_n$, - as it it known for Gegenbauer polynomials on $[-1,1]$ \cite{NIST}, cf. Remark \ref{reallim}. We thus obtain for the bracket in \eqref{clnrest} at $b=0$
\begin{eqnarray}
&&p_{l+1}^{(\alpha)}(\bar{v})\left(
v\kappa_l(v,\bar{v})
+b_lp_{l-1}^{(\alpha)}({v})p_{l}^{(\alpha)}(\bar{v})+ b_{l+1}p_{l}^{(\alpha)}({v})p_{l+1}^{(\alpha)}(\bar{v})
-\bar{v}\kappa_{l+1}(v,\bar{v})
\right)\no\\
&&=p_{l+1}^{(\alpha)}(\bar{v})\left(
\sum_{i=0}^{l-1}b_{i+1}p_{i+1}^{(\alpha)}({v})p_{i}^{(\alpha)}(\bar{v})   +\sum_{i=0}^{l+1}b_{i}p_{i-1}^{(\alpha)}({v}) p_{i}^{(\alpha)}(\bar{v})
\right.
\no\\
&&\qquad\qquad\left. -
\sum_{i=0}^{l}p_{i}^{(\alpha)}({v})
\left(b_{i+1}p_{i+1}^{(\alpha)}(\bar{v})   +b_{i}p_{i-1}^{(\alpha)}(\bar{v}) \right) 
\right)
\no\\
&&=0\ .
\end{eqnarray}
Here we have used the notation $p_{-1}^{(\alpha)}=0$, and after relabelling the sums, they can be seen to cancel.

To see that the expression \eqref{clnrest} is non vanishing in general for $b>0$, we consider the leading coefficient of \eqref{clnrest} as a polynomial in $\bar{v}$, which is of degree $2l+2$. We thus have to focus on 
\begin{equation}
\left(b_{l+1}p_{l+1}^{(\alpha)}(\bar{v})^2-
b_{l+2}p_{l}^{(\alpha)}(\bar{v})p_{l+2}^{(\alpha)}(\bar{v})\right) p_{l}^{(\alpha)}({v})\ .
\end{equation}
Because the polynomials of degree $l$ and $l+1$ do not have common zeros, it is sufficient to consider the leading coefficients inside the bracket, which read
\begin{eqnarray}
&&\frac{c(l+3+2\alpha)}{2(l+2+\alpha)}\sqrt{\frac{h_l}{h_{l+1}}}\frac{1}{h_{l+1}}\left( \frac{2^{l+1}\Gamma(2+l+\alpha)}{\Gamma(1+\alpha)(l+1)!c^{l+1}}\right)^2
\nonumber\\
&&- \frac{c(l+3+2\alpha)}{2(l+3+\alpha)}\frac{\sqrt{h_{l+1}}}{h_{l+2}}\frac{1}{\sqrt{h_{l}}}
\frac{2^{2l+2}\Gamma(1+\alpha+l)\Gamma(3+\alpha+l)}{\Gamma(1+\alpha)^2l!(l+2)!c^{2l+2}}\ ,
\label{coeff}
\end{eqnarray}
upon using \eqref{3tt} and \eqref{kappa}. Inserting \eqref{binnorm} and recalling \cite{NIST}
\begin{equation}
C_l^{(1+\alpha)}(1) =\frac{\Gamma(2+2\alpha+l)}{\Gamma(2+2\alpha)l!}\ ,
\end{equation}
it can be shown that \eqref{coeff} vanishes only if the following equality holds:
\be \label{eq1}
\frac{\left(C_{l+1}^{(1+\alpha)}\le x\ri\right)^2}{
\left(C_{l+1}^{(1+\alpha)}\le1\ri\right)^2}
-\frac{C_l^{(1+\alpha)}\le x\ri C_{l+2}^{(1+\alpha)}\le x\ri}{C_l^{(1+\alpha)}\le1\ri C_{l+2}^{(1+\alpha)}\le1\ri} =0\ ,
\ee
where 
\be
x=\frac{a^2+b^2}{a^2-b^2}\ .
\ee
The expression on the left hand side of \eqref{eq1}, usually denoted by $\Delta_n(x)$, it's know as Tur\'an determinant. By  \cite[Theorem 1]{Szwarc} $\Delta_n(x)=0$ if and only if $x=\pm1$.
Thus $c_{l,n}\equiv 0$ for $0\leq l \leq n-2$ in the limit $b \rightarrow 0$, that is when $x\to1$, which brings us back to the real line with a three-step recursion. For $x>1$  all 
Fourier coefficients are non-vanishing, $c_{l,n}\neq0$ for $0\leq l \leq n-2$, in our example of polynomials \eqref{OP-norec}  and no {\it finite-term recurrence} exists.


\begin{appendix}
\section{Proof of Lemma \ref{COP-lemma} for odd polynomials}
\label{La-odd}

In this appendix we collect the relevant formulae for the proof that \eqref{pd} holds when $m=2n+1$ and $j=2l+1$ are both odd. 
We begin by expressing the odd Gegenbauer polynomials in terms of a Gau{\ss} hypergeometric function, see e.g. \cite[8.932.3]{Grad}
\begin{eqnarray} 
C_{2n+1}^{(1+\alpha)}\left(\frac{z(r,\theta)}{c}\right)&=&
\frac{2z(r,\theta)}{c}\frac{(-1)^n\Gamma(n+2+\alpha)}{\Gamma(n+1)\Gamma(1+\alpha)}
F\left(-n, n+\alpha+2;\frac32; \frac{z(r,\theta)^2}{c^2}\right)
\no\\
&=&\frac{(-1)^n}{\G(1+\alpha)n!}\sum_{p=0}^{n}\sum_{k=0}^{2p+1}(-1)^p{n \choose p}{2p+1 \choose k}\frac{\G(2+\alpha +n+p)\G(p+1)}{\G(2p+2)}
\no\\
&&\qquad\qquad\qquad\times
r^{2p+1} R^{2(k-p)-1} e^{i\theta(2(k-p)-1)}
\no\\
&=&\frac{(-1)^n}{\G(1+\alpha)n!}\sum_{p=0}^{n}\sum_{k=0}^{2p+1}(-1)^p{n \choose p}{2p+1 \choose k}\frac{\G(2+\alpha +n+p)\G(p+1)}{\G(2p+2)}
\no\\
&&\qquad\qquad\qquad\times
r^{2p+1} R^{2(p-k)+1} e^{i\theta(2(p-k)+1)},
\label{eq:par2}
\end{eqnarray}
where we have used again the parametrisation \eqref{change}, giving two equivalent representations to be used. Likewise we obtain for the odd powers of the conjugated variable
\begin{eqnarray}
\left(\frac{\overline{z(r,\theta)}}{c}\right)^{2l+1}&=& 
\le \frac{r}{2}\ri^{2l+1} 
\left[\sum_{k=1}^{l+1}{2l+1 \choose k+l}R^{1-2k}e^{i\theta (2k-1)}+\sum_{k=1}^{l+1}{2l+1 \choose k+l}R^{2k-1}e^{i\theta(1- 2k)}\right].
\no\\
 \label{bin2}
\end{eqnarray}
The radial integral \eqref{radial} can be readily used, and we obtain for the angular integration
\begin{eqnarray}
&&\Eint C_{2n+1}^{(1+\alpha)}\left(\frac{z}{c}\right)\left(\frac{\overline{z}}{c}\right)^{2l+1}\,\d A_\alpha{(z)}=
\nonumber\\ 
&&= \frac{(1+\alpha)(-1)^n}{2^{2l+2}\pi}\sum_{k'=1}^{l+1}\sum_{p=0}^{n}\sum_{k=0}^{2p+1}{2l+1 \choose k'+l}\frac{(-1)^p\G(2+\alpha +n+p)\G(2+l+p)}{(n-p)!k!(2p-k+1)!\G(3+\alpha+l+p)}
\no\\
&&\qquad \qquad\qquad\qquad\times  
R^{2(k-p+k'-1)} \int_{0}^{2\pi}\d{\theta}\,e^{2i\theta(k-p-k')}
\no\\
&&\quad+  
\frac{(1+\alpha)(-1)^n}{2^{2l+2}\pi}\sum_{k'=1}^{l+1}\sum_{p=0}^{n}\sum_{k=0}^{2p+1}{2l+1 \choose k'+l}\frac{(-1)^p\G(2+\alpha +n+p)\G(2+l+p)}{(n-p)!k!(2p-k+1)!\G(3+\alpha+l+p)}
\no\\
&&\qquad \qquad\qquad\qquad\times  R^{2(p-k-k'+1)} \int_{0}^{2\pi}\d{\theta}\,e^{2i\theta(p-k+k')}.
\label{thetaint2}
\end{eqnarray}
Let us evaluate the first triple sum, where we have $k=p+k'$ due to the angular integration, and because of $k\leq 2p+1$ this implies $k'\leq p+1$.  We thus obtain for it
\begin{eqnarray}
&&\frac{(1+\alpha)(-1)^n}{2^{2l+1}}\sum_{k'=1}^{l+1}{2l+1 \choose k'+l}\!\!
\sum_{p=k'-1}^{n}\frac{(-1)^p\G(2+\alpha +n+p)\G(2+l+p)}{(n-p)!(k'+p)!(p+1-k')!\G(3+\alpha+l+p)}R^{4k'-2}
\no\\
&&:=\frac{(1+\alpha)(-1)^n}{2^{2l+1}}\sum_{k'=1}^{l+1}{2l+1 \choose k'+l}b_{k'}R^{4k'-2}.
\no\\
\label{thirdsum2}
\end{eqnarray}
It is a polynomials in $R$ of degree $4l+2$, and we have to show that all its coefficients $b_{k'}=b_{k'}(n,l)$ vanish for $l<n$. 
The second triple sum in \eqref{thetaint2} agrees with \eqref{thirdsum2} with $R\to R^{-1}$, due to $k=p+k'$ and $k'\leq p+1$ from the angular integration there.  
The coefficients $b_{k}$ can again be rewritten as an integral. After shifting the summation index we have
\begin{eqnarray}
b_k&=&\frac{(-1)^{k-1}}{(n+1-k)!}\,\sum_{p=0}^{n+1-k}(-1)^p{n+1-k \choose p}\frac{\G(1+\alpha +n+k+p)\G(1+l+k+p)}{\G(2k+p)\G(2+\alpha+l+k+p)}
\no\\
&=&\frac{(-1)^{k-1}}{(n+1-k)!}\,\int_{0}^{1}\d{x}\frac{x^{l+k}(1-x)^{\alpha}}{\G(1+\alpha)}\sum_{p=0}^{n+1-k}(-1)^p{n+1-k \choose p}\frac{\G(1+\alpha +n+k+p)}{\G(2k+p)}x^p
\no\\
&=&\frac{(-1)^{k-1}\G(1+\alpha+n+k)}{(n+1-k)!\G(1+\alpha)\G(2k)}\int_{0}^{1\!}\d{x}x^{l+k}(1-x)^{\alpha}F(-n-1+k, 1+\alpha+n+k; 2k; x).
\no\\
\end{eqnarray}
The very same steps as in the proof for the even polynomials allow us to manipulate the remaining integral as follows:
\begin{equation}\label{prueba4impar}
\begin{split}
&\int_{0}^{1}\,\d{x}\, x^{l+k}(1-x)^{\alpha}F(-(n+1-k), 1+\alpha+n+k; 2k; x)\\
&=\lim_{\varepsilon\to0}\int_{0}^{1}\,\d{x}\, x^{l+k+\varepsilon}(1-x)^{\alpha}F(-(n+1-k), 1+\alpha+n+k; 2k; x)\\
&=\lim_{\varepsilon\to0} \frac{\G(2k)\G(1+l+k+\varepsilon) \G(2+\alpha+n-k)\G(n-l-\varepsilon)}{\G(1+k+n)\G(3+\alpha+n+l+\varepsilon)\G(-\varepsilon-(l+1-k))}\\
&=\lim_{\varepsilon\to0} \frac{(-1)^{l-k}\G(2k)\G(1+l+k+\varepsilon) \G(2+\alpha+n-k)\G(l+2-k+\varepsilon)}{\pi\G(1+k+n)\G(3+\alpha+n+l+\varepsilon)}\\
&\qquad \qquad \times \G(n-l-\varepsilon)\sin(\pi\varepsilon)\ .
\end{split}
\end{equation}
Together with \eqref{sinepsilon} this establishes the orthogonality of the odd polynomials \eqref{thetaint2}.
In order to compute the norms for the odd polynomials we summarise the above results for the coefficients 
\begin{equation}
b_k(n,n)=\frac{(-1)^n\Gamma(1+\alpha+n+k)\Gamma(2+\alpha+n-k)}{\Gamma(1+\alpha)\Gamma(3+\alpha+2n)}\ ,
\label{bnk}
\end{equation}
which has to be inserted into \eqref{thirdsum2}, and the corresponding equation with $R\to R^{-1}$. We obtain at $n=l$
\begin{eqnarray}
&&\sum_{k'=1}^{n+1}{2n +1\choose k'+n}b_{k'}(n,n)R^{\pm(4k'-2)}\no\\
&=& 
\frac{(-1)^n(2n+1)!}{\Gamma(1+\alpha)\Gamma(2n+\alpha+3)}
\sum_{k=n+1}^{2n}\frac{\Gamma(1+\alpha+k)\Gamma(2+\alpha+2n-k)}{(2n+1-k)!k!}R^{\mp(4(n-k)+2)}
\no\\
&=& 
\frac{(-1)^n(2n+1)!}{\Gamma(1+\alpha)\Gamma(2n+\alpha+3)}
\sum_{k=0}^{n}\frac{\Gamma(1+\alpha+2n+1-k)\Gamma(1+\alpha+k)}{k!(2n+1-k)!}R^{\pm(4(n-k)+2)},\no\\ 
\label{bnksum}
\end{eqnarray}
relabelling the sum twice. These two sums with $R^\pm$ can be inserted into \eqref{thetaint2} to give a single sum
\begin{eqnarray}
&&\Eint C_{2n+1}^{(1+\alpha)}\left(\frac{z}{c}\right)\left(\frac{\overline{z}}{c}\right)^{2l+1}\,\d A_\alpha{(z)}\no\\
&&=\frac{\delta_{n,l}(1+\alpha)(2n+1)!}{2^{2n+1}\Gamma(1+\alpha)\Gamma(2n+\alpha+3)}\sum_{k=0}^{2n+1}\frac{\Gamma(1+\alpha+k)\Gamma(1+\alpha+2n+1-k)}{\Gamma(2n+1-k+1)\Gamma(k+1)}R^{4n+2-4k}.\no\\
\label{C2n+1}
\end{eqnarray}
This sum can be written as a single Gegenbauer polynomial, using its  invariance under $k\to2n+1-k$:
 \begin{eqnarray}
&=&\frac12 \sum_{k=0}^{2n+1}\frac{\Gamma(1+\alpha+k)\Gamma(1+\alpha+2n+1-k)}{\Gamma(2n+1-k+1)\Gamma(k+1)}
\left(R^{4n+2-4k}+R^{-(4n+2-4k)}\right)\no\\
 &=&\sum_{k=0}^{2n}\frac{\Gamma(1+\alpha+k)\Gamma(1+\alpha+2n+1-k)}{\Gamma(2n+1-k+1)\Gamma(k+1)}\cosh[(2n+1-2k)\ln(R^2)]\no\\
 &=&\Gamma(1+\alpha)^2 C_{2n+1}^{(1+\alpha)}\left(\frac{a^2+b^2}{a^2-b^2}\right)\ ,
 \label{Cabodd}
 \end{eqnarray}
where we used again \cite[18.5.11]{NIST} and \eqref{Rab}.
The leading power of the odd Gegenbauer polynomials can be read off  from the first line of \eqref{eq:par2},
\begin{equation}
C_{2l+1}^{(1+\alpha)}(x)=\frac{\Gamma(2l+2+\alpha)2^{2l+1}}{\Gamma(1+\alpha)(2l+1)!} x^{2l+1} + O(x^{2l-1})\ .
\end{equation}
Multiplying \eqref{thetaint2} with this factor and using that the lower powers vanish yields
\begin{eqnarray}
&&\Eint C_{2n+1}^{(1+\alpha)}\left(\frac{z}{c}\right)C_{2l+1}^{(1+\alpha)}\left(\frac{\overline{z}}{c}\right)\,\d A_\alpha{(z)}
=\delta_{2n+1,2l+1}
\frac{1+\alpha}{2n+\alpha+2}
 C_{2n+1}^{(1+\alpha)}\left(\frac{a^2+b^2}{a^2-b^2}\right).\no\\
\end{eqnarray}
It agrees with \eqref{OP} for odd indices. 

\section{Orthogonality of Chebyshev polynomials of second kind}
\label{app:Uproof}

For completeness we present an independent proof for the orthogonality of the Chebyshev polynomials of the second kind $U_n$ on the interior of the ellipse $E$ \eqref{ellipsedef},
\begin{equation}\label{normacheb}
\Eint U_{m}\left(\frac{z}{c}\right)U_{n}\left(\frac{\overline{z}}{c}\right)\,\d A{(z)}
=\frac{1}{1+n}U_n\le\frac{a^2+b^2}{a^2-b^2}\ri\delta_{n,m}\ .
\end{equation}
The argument of the proof is not new and it can be found in \cite[pag. 546]{Henrici}. It uses Stokes' Theorem (see e.g.  \cite{Hormander}), that we restate for the readers convenience. 

Let $G$ to be a bounded open set in $\mathds{C}$, such that the boundary $\partial G$ consists of a finite number
of $C^1$ Jordan curves. 
For any  $F \in C^{1}\le\overline{G}\ri$  Stokes' Theorem relates the integral over $G$ to that over its boundary $\partial G$:
\be
\int_G \overline{\partial}F(z)\d{A}(z)=\frac{1}{2i}\int_{\partial G}  F(z)\d z, \quad \overline{\partial}:=\frac{\partial}{\partial \overline{z}}\ .
\ee
In particular for $F(z)=f(z)\overline{g(z)}$ with $f,g$ analytic, we have
\be\label{Stokes}
\int_G\overline{ \partial }\left[f(z)\overline{g(z)}\right]\d{A}(z)=\int_Gf(z)\overline{g'(z)}\d{A}(z)=\frac{1}{2i}\int_{\partial G} f(z)\overline{g(z)} \d z\ .
\ee

\begin{proof}
To show \eqref{normacheb}, we can use the well-known formula \cite[18.9.21]{NIST} relating Chebyshev polynomials of the first $T_n$ and second kind
\be
T'_n(z)=nU_{n-1}(z)\ ,
\ee
for $n=1,2,\ldots$ We can thus rewrite the l.h.s of \eqref{normacheb} for any $n,m=0,1,\ldots$ to apply Stokes' Theorem
\begin{equation}\label{proofstep1}
\begin{split}
\Eint U_{n}\left(\frac{z}{c}\right)U_{m}\left(\frac{\overline{z}}{c}\right)\,\d A{(z)}&= \frac{c^2}{(n+1)(m+1)}\Eint T'_{n+1}\le\frac{z}{c}\ri\overline{T'_{m+1}\le\frac{z}{c}\ri}\d A(z)\\
&= \frac{c^2}{(n+1)(m+1)}\Eint \overline{\partial}\left[T'_{n+1}\le\frac{z}{c}\ri\overline{T_{m+1}\le\frac{z}{c}\ri}\right]\d A(z)\\
&= \frac{c^2}{(n+1)(m+1)}\frac{1}{2i}\int_{\partial E} T'_{n+1}\le\frac{z}{c}\ri\overline{T_{m+1}\le\frac{z}{c}\ri} \frac{\d z}{\pi ab}\ .\\
\end{split}
\end{equation} 
Next, we use the Joukowsky map  
\be\label{Joukowsky}
z(w)=\frac{1}{2}\le w+\frac{c^2}{w}\ri\ ,
\ee
that maps the circle $|w|=r$ of radius $r:=a+b$ onto the boundary $\partial E\ni z$ of the ellipse $E$.  The chain rule
\be\label{derivative} 
T'_{n+1}(z)=\frac{d}{dz}T_{n+1}(z)=\frac{d}{dw}T_{n+1}(z(w))\ \frac{dw}{dz}\ ,
\ee
allows us to rewrite
\begin{equation}\label{proofstep2}
\begin{split}
\int_{\partial E} T'_{n+1}\le\frac{z}{c}\ri\overline{T_{m+1}\le\frac{z}{c}\ri}\; \d z&= \int_{|w|=r} \frac{d}{dw}T_{n+1}(z(w)/c)\overline{T_{m+1}(z(w)/c)}\;\d w\ .\\
\end{split}
\end{equation} 
Note that the contribution from the Jacobian of the transformation \eqref{Joukowsky} just cancels the extra factor $\frac{dw}{dz}$ stemming from \eqref{derivative}. Furthermore, as we have stated already in \eqref{TJ}, it is well known \cite{Mason} that 
\begin{equation}
T_{n+1}(z(w)/c)=\frac{1}{2}((w/c)^{n+1}+(c/w)^{n+1})
\end{equation}
 holds for the Joukowsky map \eqref{Joukowsky}. Therefore
\begin{equation}\label{proofstep3}
\begin{split}
\frac{d}{dw}T_{n+1}(z(w)/c)&= \frac{n+1}{2w}\left[ \le\frac{w}{c}\ri^{n+1}-\le\frac{c}{w}\ri^{n+1}\right]\\
\end{split}
\end{equation} 
allows us to exploit the orthogonality on the circle
\begin{equation}
\int_{|w|=r} w^a \bar{w}^b \frac{\d w}{w}= ir^{a+b}\int_0^{2\pi}e^{i\theta(a-b)}\d \theta=2\pi i r^{2a}\delta_{a,b}\ ,
\end{equation}
as follows:
\begin{eqnarray}
&&\int_{\partial E} T'_{n+1}\le\frac{z}{c}\ri\overline{T_{m+1}\le\frac{z}{c}\ri}\; \d z\nonumber\\
&&=\frac{n+1}{4} \int_{|w|=r} \left[ \le\frac{w}{c}\ri^{n+1}-\le\frac{c}{w}\ri^{n+1}\right] \overline{\left[ \le\frac{w}{c}\ri^{m+1}+\le\frac{c}{w}\ri^{m+1}\right]}\frac{\d w}{w}\nonumber\\
&&=\frac{i\pi(n+1)}{2} \left[ \le\frac{r}{c}\ri^{2n+2}- \le\frac{c}{r}\ri^{2n+2}\right]\delta_{n,m}\ .
\label{proofstep4}
\end{eqnarray}
In this form the orthogonality is stated in \cite{Henrici}. To arrive at the right hand side of \eqref{normacheb} 
we use \eqref{C2n+1simple} and \eqref{C2nsimple}, together with $U_n=C_n^{(1)}$.
\end{proof}

\section{Alternative 
proof of Lemma \ref{COP-lemma}}
\label{La-proof2}
In the proof presented here we do not need to distinguish between Gegenbauer polynomials with even and odd parity. The main assumption to be made here is that the orthogonality of the Chebyshev polynomials of the second kind $U_n(x)$ holds on the unweighted ellipse, as shown in \eqref{normacheb} in the previous Appendix \ref{app:Uproof}. 
Using $U_n(x)=C_n^{(1)}(x)$, we only need to show that from \eqref{normacheb} follows that
\begin{equation}
\Eint C_{m}^{(1+\alpha)}\left(\frac{z}{c}\right)\overline{z}^j\,\d A_{\alpha(z)}=0\ ,\ \ j=0,1,\ldots,m-1\ ,
\label{LaB}
\end{equation}
holds for  $\alpha>-1$.
The computation of the norms  however, is much more involved in this approach and we will not present it here. It leads to the same result as given in Lemma \ref{COP-lemma}.

The general Gegenbauer polynomials can be explicitly written in the following form,
as can be seen for example from the representations in terms of a hypergeometric function  (see \eqref{eq:par} and \eqref{eq:par2}),
\begin{equation}
\label{Cexpand}
C^{(1+\alpha)}_n(z) = \sum_{j=0}^n \kappa^n_j(\alpha) z^j\ ,
\end{equation}
with the coefficients reading
\begin{equation}
\label{kappa}
\kappa^n_j(\alpha)=  \left\{ \begin{array}{ll} 
 \frac{\displaystyle (-1)^{(n - j)/2} 2^j \Gamma(\alpha + 1+(n + j)/2)}{
\displaystyle \Gamma(\alpha + 1) \Gamma(j + 1) \Gamma(1+(n -j)/2)},
 &\mbox{for}\ \  n - j \ {\rm even},\\ 
0, & \mbox{for}\ \ n - j \ {\rm  odd}. \end{array} \right. 
\end{equation}
This immediately implies the following relation
\begin{equation}
\label{ka}
\kappa^n_j(\alpha)=\frac{\Gamma(\alpha + 1+(n + j)/2)}{\Gamma(\alpha + 1)\Gamma(1+(n +j)/2)}\kappa^n_j(0)\ ,
\end{equation}
between the expansion coefficients for general Gegenbauer  and Chebyshev polynomials of the second kind ($\alpha=0$).
The former satisfy the following three-term recurrence relation
\begin{equation}
\label{threeterm}
z C^{(1+\alpha)}_n(z) = \frac{n + 1}{2( n + \alpha + 1)} C^{(1+\alpha)}_{n+1}(z) + \frac{n + 2 \alpha + 1}{2( n + \alpha + 1)} C^{(1+\alpha)}_{n-1}(z), \ \ \ n = 1,2,3,\ldots\ .
\end{equation}
Let us recall our notation for the inner product \eqref{innerproduct}, 
\begin{equation}
\label{innerproductB}
\langle f,g\rangle_\alpha := \Eint f(z)\overline{g(z)}\,\d A_\alpha{(z)} \ .
\end{equation}
From  (\ref{normacheb}) we immediately have for arbitrary integers $m$ and $j$, satisfying $m > j \geq 0$, that
\begin{equation}
\langle C^{(1)}_m(z/c),z^j \rangle_0 = 0\ .
\end{equation}
Using the three-term recurrence relation \eqref{threeterm} for $\alpha=0$, we see that $z^l C^{(1)}_m(z/c)$ can be expanded in terms of $C^{(1)}_k(z/c)$ 
with $m-l \leq k \leq m+l$. It thus follows that
\begin{equation}
\label{C1zbz}
\langle C^{(1)}_m(z/c), z^j {\bar z}^l \rangle_0 = 
\langle z^l C^{(1)}_m(z/c), z^j \rangle_0 =0
\end{equation}
holds, for $j \geq 0$, $l \geq 0$ and $j + l < m$. Our goal is to use this relation and to prove the orthogonality \eqref{LaB} by relating inner products, with 
general $\alpha>-1$ to those with $\alpha=0$. 
In particular we can generalise the above statement \eqref{C1zbz} to the following
\begin{lem}
For an arbitrary positive integer $m$
\begin{equation}
\langle C^{(1+\alpha)}_m(z),z^j {\bar z}^l\rangle_{\alpha} = 0
\label{Corth}
\end{equation}
holds for $\alpha>-1$, given that $j \geq 0$, $l \geq 0$ and $j + l <m$.
\end{lem}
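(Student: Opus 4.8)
Fix $m$ and write $\Phi(\alpha):=\langle C_m^{(1+\alpha)}(z/c),z^j\bar z^l\rangle_\alpha$ (this is the assertion up to the harmless rescaling $z\mapsto z/c$ present already in \eqref{C1zbz}). If $m-j-l$ is odd the integrand has no zero Fourier mode and $\Phi$ vanishes term by term by parity, so I may assume $m-j-l$ is even and set $D:=(m-j-l)/2>0$. The plan is to prove that $\Phi$ is a \emph{polynomial in $\alpha$} of degree at most $D$, and then to exhibit $D+1$ distinct zeros, whence $\Phi\equiv0$. The whole argument runs by strong induction on the degree $m$, the cases $m=0,1$ being immediate (vacuous, or odd-parity).

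\textbf{Polynomial structure.} Expanding $C_m^{(1+\alpha)}(z/c)=\sum_k\kappa_k^m(\alpha)c^{-k}z^k$ and computing the planar moments $\mu_\alpha(a,b):=\int_E z^a\bar z^b\,\d A_\alpha(z)$ in the elliptic coordinates \eqref{cambio}, the angular integration yields an $\alpha$-independent frequency-selection factor, while the radial integration yields the Beta-type factor $\tfrac{\Gamma(\alpha+2)\Gamma(s+1)}{2\,\Gamma(s+\alpha+2)}$ with $s=(a+b)/2$ equal to half the total degree. Inserting the coefficient identity \eqref{ka}, which supplies $\Gamma(\alpha+1+(m+k)/2)/[\Gamma(\alpha+1)\Gamma(1+(m+k)/2)]$, and using $(m+k)/2-s_k=D$ with $s_k=(k+j+l)/2$, the complete $\alpha$-dependence of the $k$-th term collapses (after the Gamma ratios telescope) to $w_k\,(1+\alpha)\prod_{i=2}^{D}\frac{s_k+\alpha+i}{s_k+i}$, where $w_k:=\kappa_k^m(0)c^{-k}\mu_0(k+l,j)$ are precisely the summands of the $\alpha=0$ inner product. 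This is manifestly a polynomial in $\alpha$ of degree $D$. Moreover, from the normalisation in \eqref{ellipseweight} one has $\Phi(\alpha)=(1+\alpha)\int_E C_m^{(1+\alpha)}(z/c)\,\bar z^j z^l(1-h(z))^\alpha\,\d A(z)$, so $\Phi(-1)=0$: this gives one zero for free.

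\textbf{Integer zeros.} I next show $\Phi(\beta)=0$ for each integer $\beta\in\{0,1,\dots,D-1\}$. First, for integer $\beta$ one has $\langle C_m^{(1)}(z/c),z^j\bar z^l\rangle_\beta=0$: expand $(1-h)^\beta$ by the binomial theorem and recall $h(z)=A|z|^2-B\Re(z^2)$ from \eqref{oldell} is of total degree two in $(z,\bar z)$, so each $h^p$ with $p\le\beta$ raises the combined monomial degree only by $2p\le2\beta$; every resulting integral is thus an $\alpha=0$ inner product $\langle C_m^{(1)},z^{l'}\bar z^{j'}\rangle_0$ with $j'+l'=j+l+2p<m$ (here $j+l<m-2\beta$ since $\beta\le D-1$), which vanishes by \eqref{C1zbz}. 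Second, use the \emph{terminating} Gegenbauer connection formula $C_m^{(1)}=\sum_{\nu=0}^{\beta}g_\nu\,C_{m-2\nu}^{(1+\beta)}$, whose coefficients carry the factor $(1-(1+\beta))_\nu=(-\beta)_\nu$ vanishing for $\nu>\beta$, and with leading coefficient $g_0\ne0$. Pairing with $z^j\bar z^l$ in $\langle\cdot,\cdot\rangle_\beta$, the left-hand side is the quantity just shown to vanish; for $\nu\ge1$ the terms $\langle C_{m-2\nu}^{(1+\beta)},z^j\bar z^l\rangle_\beta$ have degree $m-2\nu<m$ with $j+l<m-2\beta\le m-2\nu$ and so vanish by the induction hypothesis; the surviving $\nu=0$ term gives $g_0\Phi(\beta)=0$, hence $\Phi(\beta)=0$. (For $\beta=0$ this reduces to $\Phi(0)=\sum_k w_k=0$, requiring only \eqref{C1zbz}.)

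\textbf{Conclusion and main obstacle.} Thus $\Phi$ is a polynomial of degree $\le D$ possessing the $D+1$ distinct roots $\{-1,0,1,\dots,D-1\}$, so $\Phi\equiv0$ for every $\alpha>-1$, which is the Lemma. The step I expect to be most delicate is the polynomial claim: one must verify that the two independent sources of $\alpha$-dependence, from the Gegenbauer coefficients \eqref{ka} and from the weight exponent in $\mu_\alpha$, really combine into a single polynomial of degree \emph{exactly} $D$, since this is what makes a finite root count decisive. A secondary subtlety is that the connection expansion must terminate at $\nu=\beta$ so that all lower-order terms are covered by the induction hypothesis; this is exactly what the vanishing Pochhammer factor $(-\beta)_\nu$ guarantees, and it is also the reason only the integer values $\beta\le D-1$ (not arbitrary real $\alpha$) are directly accessible from the $\alpha=0$ input.
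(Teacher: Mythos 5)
Your proof is correct, and it takes a genuinely different route from the paper's. Both arguments are inductions on $m$ that bootstrap from the same $\alpha=0$ input \eqref{C1zbz} and exploit the same Gamma-ratio structure \eqref{ka} and \eqref{al0rel}, but the inductive mechanisms differ. The paper splits the step $m=k+1$ into three strata: for $j+l\leq k-2$ it moves a factor of $z$ onto the polynomial and descends via the three-term recurrence \eqref{threeterm}; on the critical stratum $j+l=k-1$ --- precisely your $D=1$, where your telescoping product degenerates to the single factor $(1+\alpha)$ --- the two Gamma ratios cancel exactly and the inner product reduces termwise to $\alpha=0$; and $j+l=k$ is handled by parity. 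You instead convert the statement into polynomial identity testing in $\alpha$: your telescoped factor is indeed what \eqref{ka} combined with the moment ratio \eqref{al0rel} yields, since with $M_k=s_k+D$ one has $(\alpha+1)_{M_k}/(\alpha+2)_{s_k}=(\alpha+1)\prod_{i=2}^{D}(\alpha+s_k+i)$, so $\Phi$ is a polynomial of degree at most $D$, and the $D+1$ roots $\{-1,0,\dots,D-1\}$ force $\Phi\equiv0$; the integer roots come from the finite binomial expansion of $(1-h)^\beta$ (landing on \eqref{C1zbz} because $j+l+2p\leq m-2D+2\beta<m$) together with the terminating connection formula carrying $(-\beta)_\nu$, whose $\nu\geq1$ terms have $m-2\nu\geq j+l+2$ and so are killed by the induction hypothesis. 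What each approach buys: the paper's route is leaner --- no connection coefficients, no excursion to the polynomial extension of $\Phi$ beyond $\alpha>-1$ --- while yours is a more robust scheme that would apply to any one-parameter deformation whose moments are rational of controlled degree in the parameter, and it isolates structurally why the integer values of $\alpha$ are exactly the points accessible from the Chebyshev input. Two cosmetic points: justify $\Phi(-1)=0$ from the termwise factor $(1+\alpha)$ in your derived formula rather than from the integral representation, which is singular at $\alpha=-1$ (the product of $(1+\alpha)$ with the divergent integral is indeterminate there); and only the bound $\deg\Phi\leq D$ is needed and guaranteed --- cancellations among the $k$-terms could lower the degree, but that only helps the root count, so your worry about degree \emph{exactly} $D$ is unnecessary.
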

Obviously \eqref{LaB} then follows by choosing $l=0$.
\begin{proof} 
As it was already noted, due to the invariance of the weight and domain under the reflection $z\to-z$, we have that 
\begin{equation}
\label{parity}
\langle z^p,z^q\rangle_\alpha\neq 0 \ \ \mbox{iff}\ \ p+q \ \ \mbox{even}\ ,
\end{equation}
that is when $p$ and $q$ have the same parity. Furthermore, we can relate the inner products of such monomials with general $\alpha>-1$ and with $\alpha=0$ as follows. The change of variables \eqref{change} decouples radial and angular integration and leads to 
\begin{eqnarray}
\langle z^p,z^q\rangle_\alpha &=& \frac{1+\alpha}{\pi}\int_0^1\d r\,r  \left(\frac{r}{2}\right)^{p+q}(1-r^2)^\alpha \!\int_0^{2\pi}\d \theta
(R e^{i\theta}+R^{-1}e^{-i\theta})^p(R^{-1} e^{i\theta}+Re^{-i\theta})^q\
\no\\
&=& \frac{\Gamma(2+\alpha)\Gamma(2+(p+q)/2)}{\Gamma(2+\alpha+(p+q)/2)} \langle z^p,z^q\rangle_0\ ,
\label{al0rel}
\end{eqnarray}
due to standard integrals
\begin{equation*}
\int_0^1\d r\,r^{p+q+1}(1-r^2)^\alpha= \frac{\Gamma(1+\alpha)\Gamma(1+(p+q)/2)}{\Gamma(2+\alpha+(p+q)/2)} \ .
\end{equation*}
We proceed to prove \eqref{Corth} by induction. For $m=1$, we can readily find
\begin{equation}
\langle C^{(1+\alpha)}_1(z/c), 1 \rangle_{\alpha} = \kappa^1_1(\alpha) \langle z/c,1 \rangle_{\alpha} = 0\ ,
\end{equation} 
which holds due to parity, see \eqref{parity}.
Now, suppose that the claim \eqref{Corth} holds for $m = 1,2,\cdots,k$. 
We will show 
\begin{equation}
\langle C^{(1+\alpha)}_{k+1}(z/c),z^j {\bar z}^l\rangle_{\alpha} = 0\ ,
\end{equation}
separately for (i) $j+l\leq k-2$, (ii) $j+l=k-1$ and (iii) $j+l=k$.

\begin{itemize}
\item[(i)] If $j,l \geq 0$ and $j + l+1 \leq k-1$, the induction assumption guarantees that
\begin{eqnarray}
\label{e1}
0 & = & c^{-1}\langle C^{(1+\alpha)}_k(z/c),z^j {\bar z}^{l+1}\rangle_{\alpha} 
 = \langle (z/c) \,C^{(1+\alpha)}_k(z/c),z^j {\bar z}^l \rangle_{\alpha}
\nonumber \\ 
& = & \frac{k+1}{2 (k + \alpha + 1)} \langle C^{(1+\alpha)}_{k+1}(z/c),z^j {\bar z}^l \rangle_{\alpha} 
+ \frac{k + 2 \alpha + 1}{2 (k + \alpha + 1)} \langle C^{(1+\alpha)}_{k-1}(z/c),z^j {\bar z}^l \rangle_{\alpha} 
\nonumber \\ 
& = & \frac{k+1}{2 (k + \alpha + 1)} \langle C^{1+\alpha}_{k+1}(z/c),z^j {\bar z}^l \rangle_{\alpha}. 
\end{eqnarray}
Here, we have used the recursion relation \eqref{threeterm} and in the second line again the induction assumption, to arrive at the claimed statement.

\item[(ii)] If $j,l \geq 0$ and $j + l = k-1$, we may directly use the expansion \eqref{Cexpand} to obtain 
\begin{eqnarray}
\label{e3}
\frac{1}{1+\alpha} \langle C^{(1+\alpha)}_{k+1}(z/c),z^j {\bar z}^l \rangle_{\alpha}
&=& \frac{1}{1+\alpha}\sum_{p=0}^{k+1} \kappa^{k+1}_p(\alpha) \langle (z/c)^p, z^j {\bar z}^l \rangle_{\alpha} 
\nonumber \\ 
& = & \sum_{p=0}^{k+1} \kappa^{k+1}_p(\alpha) \frac{\Gamma(\alpha + 1) \Gamma((k + p + 3)/2)}{\Gamma((k + 2 \alpha + p + 3)/2)}  
\langle (z/c)^p, z^j {\bar z}^l \rangle_0 
\nonumber \\ 
& = & \sum_{p=0}^{k+1} \kappa^{k+1}_p(0) \langle (z/c)^p, z^j {\bar z}^l \rangle_0 \no\\
&=& \langle C^{(1)}_{k+1}(z/c),z^j {\bar z}^l \rangle_0 = 0\ .
\end{eqnarray}
In the second step we have used the relation \eqref{al0rel}, to be able to relate to the known orthogonality \eqref{C1zbz} via \eqref{Cexpand}. 

\item[(iii)] If $j,l \geq 0$ and $j + l = k$, we see from \eqref{Cexpand} that  
\begin{equation}
\label{e2}
\langle C^{(1+\alpha)}_{k+1}(z/c),z^j {\bar z}^l \rangle_{\alpha} 
= \sum_{p=0}^{k+1} \kappa^{k+1}_p(\alpha) \langle (z/c)^p, z^j {\bar z}^l \rangle_{\alpha} = 0
\end{equation}
holds due to parity. This is because from \eqref{kappa} $k+1-p$ is even, whereas $p+j-l=p+k-2l$ is consequently odd, leading to a vanishing expectation value.
\end{itemize}
\end{proof}

\end{appendix}


\end{document}